\let\doendproof\endproof
\renewcommand{\endproof}{\hfill $\square$\doendproof}
\newcommand{\clusterDrawing}{\DiskMap-framed Drawings of Non-planar Arrangements}
\title{Drawing Clustered Graphs on Disk Arrangements\thanks{Work
was partially supported by grant WA 654/21-1 of the German Research Foundation
(DFG).}}
\author{Tamara Mchedlidze\inst{1} \and Marcel Radermacher\inst{1} \and Ignaz
Rutter\inst{2} \and Nina Zimbel\inst{1}}
\institute{
		Department of Computer Science, Karlsruhe  Institute of Technology,
	Germany \and
	Department of Computer Science and Mathematics, University
	of Passau, Germany \\
	\email{mched@iti.uka.de}, \email{radermacher@kit.edu}, \email{rutter@fim.uni-passau.de}}
\begin{document}

\maketitle

\begin{abstract}
	Let $G=(V, E)$ be a planar graph and let $\Clustering$ be a partition of $V$.
	We refer to the  graphs induced by the vertex sets in $\Clustering$ as
	\emph{clusters}.
	Let $\DiskMap$ be an arrangement of disks with a bijection between the disks
	and the clusters.  Akitaya et al.~\cite{DBLP:journals/corr/abs-1709-09209}
	give an algorithm to test whether $(G, \Clustering)$ can be embedded onto
	$\DiskMap$ with the additional constraint that edges are routed through a set
	of pipes between the disks.  Based on such an embedding, we prove that every
	clustered graph and every disk arrangement without pipe-disk intersections has
	a planar straight-line drawing where every vertex is embedded in the disk
	corresponding to its cluster. This result can be seen as an extension of the
	result by Alam et al.~\cite{JGAA-367} who solely consider biconnected
	clusters. Moreover, we prove that it is $\cNP$-hard to decide whether a
	clustered graph has such a straight-line drawing, if we permit pipe-disk
	intersections.
	%even if all disks have unit size. This answers an open question of Angelini
	%et al.~\cite{10.1007/978-3-662-45803-7_34}.
%
\end{abstract}

\section{Introduction}
\begin{figure}[t]
	\centering
	\subfloat[Conditions (C1) and (C2)]{
		\includegraphics[page=1]{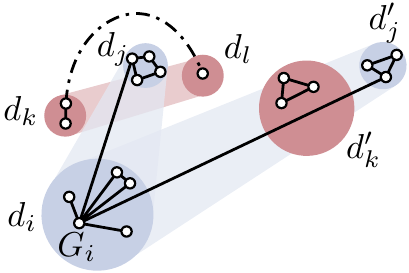}
	}
	\quad
	\subfloat[Conditions (P1) and (P2)]{
		\includegraphics[page=3]{fig/disk_arrangement.pdf}
	}

	\caption{(a) The blue disk arrangement satisfies the conditions (C1,
		C2) and (P1, P2). The disks $d_k, d_l$ and $d_i, d_j$ violate condition
		(C1). The disks $d'_k$ and $d_i, d'_j$ violate (C2). Note that the edge from
		disk $d_i$ to $d'_j$ has to cross the boundary of $d'_k$ twice. (b) The disks
		$d_k, d_i$ violate condition (P1) and $d'_k$ and $d_i, d_j$ violate
		condition (P2).
	}
	\label{fig:disk_arrangement}
\end{figure}

In practical applications, it often happens that a graph drawing produced by an
algorithm has to be post processed by hand to comply with some particular
requirements. Thus, the user moves vertices and modifies edges in order to fulfill
these requirements.  Interacting with large graphs is often time-consuming. It
takes a lot of time to group and move the vertices or process them individually
and to control the overall appearance of the produced drawing.  The problem we
study in this paper addresses this scenario. In particular, we assume that a
user wants to modify a drawing of a large planar graph $G$. Instead we provide
her with an abstraction of this graph.  The user modifies the abstraction and
thus providing some constraints on how the drawing of the initial graph should
look like.  Then our algorithm propagates the drawing of the abstraction to the
initial graph so that the provided constraints are satisfied.
%Then our algorithms applies the changes that have been done on the abstraction
%to the initial graph $G$ so that the provided constraints are satisfied.  In
%this paper, we study the problem of drawing a large plane clustered graph $G$
%on a prescribed disk arrangement $\DiskMap$.  

More formally, we model this scenario in terms of a \emph{(flat) clustering} of
a graph $G=(V,E)$, i.e., a partition $\Clustering  = \{V_1, \dots, V_k\}$ of the
vertex set $V$.  We refer to the pair $\Cluster = (G, \Clustering)$ as a
\emph{clustered graph} and the graphs $G_i$ induced by $V_i$ as \emph{clusters}.
The set of edges $E_i$ of a cluster $G_i$ are \emph{intra-cluster edges} and the
set of edges with endpoints in different clusters \emph{inter-cluster edges}.
A \emph{disk arrangement} $\DiskArr =\{\Disk_1, \dots, \Disk_k \}$ is a set of
pairwise disjoint disks in the plane together with a bijective mapping $\mu(V_i)
= \Disk_i$ between the clusters $\Clustering$ and the disks $\DiskArr$.  We refer
to a disk arrangement $\DiskArr$ with a bijective mapping $\mu$ as a \emph{disk
arrangement of $\Cluster$}, denoted by $\DiskMap$. 
A \emph{$\DiskMap$-framed drawing of a clustered graph $\Cluster=(G,
\Clustering)$} is a planar drawing of $G$ where each cluster $G_i$ is drawn
within its corresponding disk $d_i$.  We study the following problem: given a
clustered planar graph $\Cluster = (G, \Clustering)$, an embedding $\embedding$
of $G$ and a disk arrangement $\DiskMap$ of $\Cluster$, does $\Cluster$ admit a
$\DiskMap$-framed straight-line drawing homeomorphic to $\embedding$?

%An \emph{embedding} $\embedding$ of $G$, i.e., a topological planar drawing of
%$G$, is \emph{compatible} with a planar disk arrangement $\DiskMap$ if
%$\embedding$ is homeomorphic to a $\DiskMap$-framed embedding of $\Cluster$ such
%that edges of a cluster lie within the corresponding disks, and edges
%between distinct clusters lie within the corresponding pipes.\todo{do we allow
%to intersect edges more than once?}

\myparagraph{Related Work}
Feng et al.~\cite{DBLP:conf/esa/FengCE95} introduced the notion of \emph{clustered
graphs} and \emph{c-planarity}. A graph $G$ together with a recursive partitioning
of the vertex set is considered to be a clustered graph. An embedding of $G$
is \emph{c-planar} if
\begin{inparaenum}[(i)]
\item each cluster $c$ is drawn within a connected region $R_c$, 
\item two regions $R_c, R_d$ intersect if and only if the cluster $c$ contains
	the cluster $d$ or vice versa, and 
\item every edge intersects the boundary of a region at most once.
\end{inparaenum}
They prove that a c-planar embedding of a connected clustered graph can be
computed in $O(n^2)$ time.  It is an open question whether this result can be
extended to disconnected clustered graphs. Many special cases of this
problem have been considered~\cite{BLASIUS2016306}.

Concerning drawings of c-planar clustered graphs, Eades et al.~\cite{Eades2006}
prove that every c-planar graph has a c-planar straight-line drawing where each
cluster is drawn in a convex region.  Angelini et
al.~\cite{DBLP:journals/dcg/AngeliniFK11} strengthen this result by showing that
every c-planar graph has a c-planar straight-line drawing in which every cluster
is drawn in an axis-parallel rectangle. The result of Akitaya et
al.~\cite{DBLP:journals/corr/abs-1709-09209} implies that in $O(n \log n)$ time
one can decide whether an abstract graph with a flat clustering has an embedding
where each vertex lies in a prescribed topological disk and every edge is routed
through a prescribed topological pipe.  In general they ask whether a simplicial
map $\varphi$ of $G$ onto a 2-manifold $M$ is a \emph{weak embedding}, i.e., for
every $\epsilon > 0$, $\varphi$ can be perturbed into an embedding
$\psi_\epsilon$ with $||\varphi - \psi_\epsilon|| < \epsilon$.

%In general their algorithm decides
%whether a simplicial map $\varphi$ of $G$ onto a 2-manifold $M$ is a \emph{weak
%embedding}, i.e., for every $\epsilon > 0$, $\varphi$ can be perturbed into an
%embedding $\psi_\epsilon$ with $||\varphi - \psi_\epsilon|| < \epsilon$.

Godau~\cite{10.1007/3-540-58950-3_377} showed that  it is $\cNP$-hard to decide
whether an embedded graph has a $\DiskMap$-framed straight-line drawing. The
proof relies on a disk arrangement $\DiskMap$ of overlapping disks that have
either radius zero or a large radius. 

Banyassady et
al.~\cite{10.1007/978-3-319-62127-2_7} study whether the intersection graph of
unit disks has a straight-line drawing such that each vertex lies in its disk.
They proved that this problem is $\cNP$-hard regardless of whether the embedding
of the intersection graph is prescribed or not. Angelini et
al.~\cite{10.1007/978-3-662-45803-7_34} showed it is $\cNP$-hard to decide
whether an abstract graph $G$ and an arrangement of unit disks have a
$\DiskMap$-framed straight-line drawing. They leave the problem of finding a
$\DiskMap$-framed straight-line drawing of $G$ with a fixed embedding as an open
question.  Alam et al.~\cite{JGAA-367} prove that it is \cNP-hard to decide
whether an embedded clustered graph has a c-planar straight-line drawing where
every cluster is contained in a prescribed (thin) rectangle and edges have to
pass through a defined part of the boundary of the rectangle. Further, they
prove that all instances with biconnected clusters always admit a solution.
Their result implies that graphs of this class have $\DiskMap$-framed
straight-line drawings.

Rib\'o~\cite{ribo2006realization} shows that every embedded clustered graph
where each cluster is a set of independent vertices has a straight-line drawing
such that every cluster lies in a prescribed disk. In contrast to our setting
Rib\'o allows an edge $e$ to intersect a disk of a cluster $G_i$ that
does not contain an endpoint of $e$.

\myparagraph{Contribution} 
A \emph{pipe $p_{ij}$ of two clusters $V_i,
V_j$} is the \emph{convex hull} of the disks $\Disk_i$ and $\Disk_j$, i.e., the
smallest convex set of points containing $\Disk_i$ and $\Disk_j$; see
Fig.~\ref{fig:disk_arrangement}. 
We refer to a topological planar drawing of $G$ as an \emph{embedding of $G$}.
A \emph{$\DiskMap$-framed embedding of $G$} is a $\DiskMap$-framed topological
drawing of $G$ with the additional requirement that \begin{inparaenum}[(i)]
\item each intra-cluster edge entirely lies in its disk \item each inter-cluster
edge $uv$ intersects with a pipe $p_{ij}$ if and only if $u$ and $v$ are
vertices of the clusters $G_i$ and $G_j$, respectively, and \item each edge
	crosses the boundary of a disk at most once.  \end{inparaenum} This concept is
also known as \emph{c-planarity with embedded pipes}~\cite{CORTESE20091856}.
An embedding $\embedding$ of $G$ is \emph{compatible with $\DiskMap$} if
$\embedding$ is homeomorphic to a $\DiskMap$-framed embedding of $G$. The result
of Akitaya et al. can be used to decide whether an embedding $\embedding$ of
$G$ is compatible with $\DiskMap$.

The following two conditions are necessary, for $\Cluster$ to have a
$\DiskMap$-framed embedding:
\begin{inparaenum}[(C1)]
	\item if $(V_i \times V_j) \cap E \not= \emptyset$ and $(V_k \times V_l) \cap
		E \not= \emptyset$ ($i,j, k, l$ pairwise distinct), then the intersection of
		the pipes $p_{ij}$ and $p_{kl}$ is empty, and
\item the set $p_{ij} \setminus \Disk_k$ is connected.
\end{inparaenum}
Thus, in the following we assume that $\DiskMap$ satisfies (C1) and (C2).  A
\emph{planar} disk arrangement additionally satisfies the condition that
\begin{inparaenum}[(P1)] \item the pairwise intersections of all disks are
		empty, and
	\item $(V_i \times V_j) \cap E \not=\emptyset$, the intersection of $p_{ij}$
		with all disks $d_k$ (corresponding to $V_k$) is empty ($i,j,k$ pairwise
		distinct). 
\end{inparaenum}
A planar disk arrangement can be seen as a thickening of a planar straight-line
drawing of the graph obtained by contracting all clusters.

We prove that every clustered graph $(G, \Clustering)$ with planar disk
arrangement $\DiskMap$ and an $\DiskMap$-framed embedding $\embedding$ has a
$\DiskMap$-framed planar straight-line drawing homeomorphic to~$\embedding$.
Taking the result of Akitaya et al.~\cite{DBLP:journals/corr/abs-1709-09209}
into account, our result can be used to test whether an abstract clustered graph
with connected clusters has a $\DiskMap$-framed straight-line drawing. Cluster
$G_i$ in Fig.~\ref{fig:disk_arrangement} shows that in general clusters cannot
be augmented to be biconnected, if the embedding is fixed. Hence, our result is
generalization of the result of Alam et al.~\cite{JGAA-367}.  In
Section~\ref{sec:hardness} we show that the problem is $\cNP$-hard in the case
that the disk arrangements does not satisfy condition (P2).  From now on we
refer to a planar straight-line drawing of $G$ simply as a drawing of $G$.

\section{Drawing on Planar Disk Arrangements}

\newcommand{\Cin}{\Cluster_{\mathrm{in}}}
\newcommand{\Cout}{\Cluster_{\mathrm{out}}}

\begin{figure}[t]
	\centering
	\includegraphics[page=2,width=0.4\textwidth]{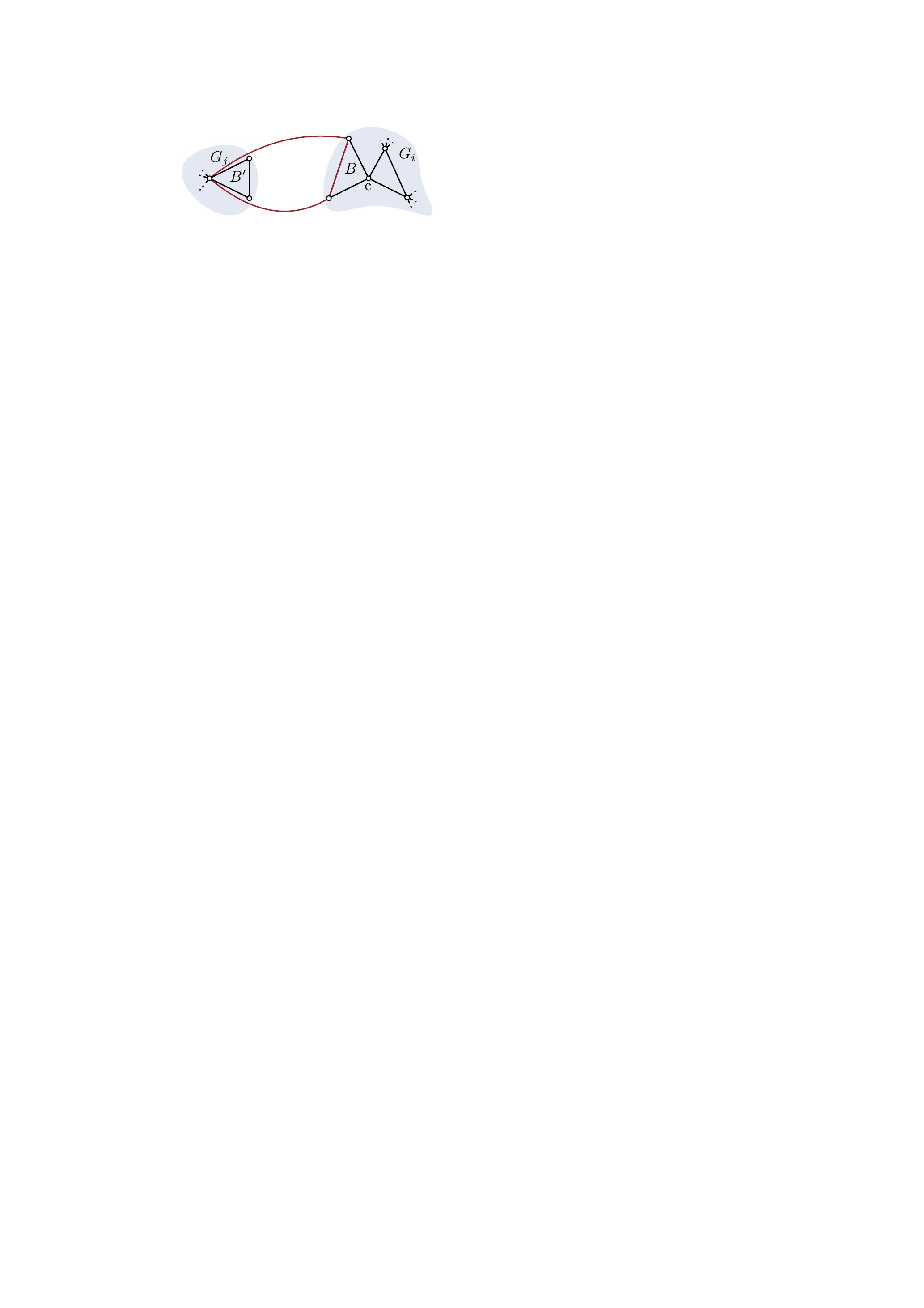}
	\caption{A planar clustered graph $\Cluster$ that is not simple.
	}
	%\label{fig:def:simple}
	\label{fig:non_simple_clustered_graph}
\end{figure}

In this Section we prove that every \emph{simple} clustered graph with a planar
disk arrangement $\DiskMap$ and $\DiskMap$-framed embedding has a
$\DiskMap$-framed drawing.  
An embedded clustered
graph $\Cluster$ is \emph{simple} if for every $i,j$, there is no cluster $G_h
(i,j \neq h)$ embedded in the interior of the subgraph induced by $V_i \cup
V_j$; see Fig.~\ref{fig:non_simple_clustered_graph}.
Note that this is a necessary condition for the corresponding disk arrangement
to be planar. A clustered graph $\Cluster  = (G, \Clustering)$ is
\emph{connected} if each cluster $G_i$ is connected.  

We prove the statement by induction on the number of intra-cluster edges. In
Lemma~\ref{lem:contraction} we show that we can indeed reduce the number of
intra-cluster edges by contracting intra-cluster edges. In
Lemma~\ref{lemma:draw_with_outerface}, we prove that the statement is correct if
the outer face is a triangle and $\Cluster$ is connected.  In
Theorem~\ref{thm:draw_planar} we extend this result to clustered graphs whose
clusters are not connected.

Let $\Cluster = (G, \Clustering)$ with a disk arrangement $\DiskMap$ and a
$\DiskMap$-framed embedding $\embedding$. Let $uv$ be an intra-cluster edge of
$G$ that is not an edge of a separating triangle. We obtain a \emph{contracted
clustered graph $\Cluster / e$} of $\Cluster$ be removing $v$ from $G$ and
connecting the neighbors of $v$ to $u$. We obtain a corresponding embedding
$\embedding / e$ from $\embedding$ by routing the edges $vw \in E, w\not=u$
close to $uv$.

\begin{lemma} \label{lem:contraction}
	Let $\Cluster = (G, \Clustering)$ be a connected simple clustered graph with a
	planar disk arrangement $\DiskMap$ and a $\DiskMap$-framed embedding
	$\embedding$. Let $e$ be an intra-cluster edge that is not an edge of a
	separating triangle. Then $\Cluster$ has a $\DiskMap$-framed drawing that is
	homeomorphic to $\embedding$ if $\Cluster / e$ has a $\DiskMap$-framed drawing
	that is homeomorphic to $\embedding/e$.
\end{lemma}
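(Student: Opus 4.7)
The plan is to start from the given straight-line drawing $\Gamma$ of $\Cluster/e$ and produce a drawing of $\Cluster$ by \emph{splitting} the contracted vertex. Let $w$ denote the image of the contracted vertex in $\Gamma$; we will keep $u$ at position $w$ and insert $v$ at a carefully chosen point in a small neighbourhood of $w$ inside the disk $d_i$ of the common cluster.

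First I would analyse the combinatorics at $w$. In $\embedding/e$ the cyclic order of edges around $w$ is obtained by concatenating the cyclic order around $u$ in $\embedding$ (minus the edge $uv$) with that around $v$ (minus $vu$). Because $uv$ is not an edge of a separating triangle, every common neighbour of $u$ and $v$ in $G$ is the third vertex of a triangular face incident to $uv$, of which there are at most two; hence in the cyclic order at $w$ the incident edges split into an arc of ``$u$-only'' edges and an arc of ``$v$-only'' edges, separated by up to two (possibly doubled) edges to common neighbours. This identifies precisely which edges at $w$ should be reassigned to $v$.

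Next I would choose an $\varepsilon>0$ small enough that the open disk $B_\varepsilon(w)$ is contained in $d_i$ (possible since $w\in d_i$) and meets $\Gamma$ only along $w$ and the edges incident to $w$. The transition from the $u$-arc to the $v$-arc in the cyclic order at $w$ determines an open angular sector at $w$ in which no edge of $\Gamma$ lies; pick a direction $\vec r$ in this sector and place $v$ at $w+\delta\vec r$ for a $\delta\ll\varepsilon$ to be fixed. Then I would redraw the edges as straight segments: keep each $u$-only edge as is, draw each $v$-only edge $vx$ as a segment from $v$ to $x$, draw the intra-cluster edge $uv$ as the segment $[w,w+\delta\vec r]\subset B_\varepsilon(w)\subset d_i$, and for each common neighbour $w'$ split the single segment $ww'$ in $\Gamma$ into the two segments $uw'$ and $vw'$.

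Finally I would verify, by a standard continuity argument, that for all sufficiently small $\delta$ the result is a $\DiskMap$-framed drawing homeomorphic to $\embedding$. Outside $B_\varepsilon(w)$ each new segment $vx$ is within $O(\delta)$ of the corresponding segment $wx$, so planarity with respect to edges not incident to $w$ is preserved and every boundary $\partial d_j$ is crossed the same number of times and in the same pipe; inside $B_\varepsilon(w)$ the choice of $\vec r$ inside the angular gap ensures that the cyclic orders at $u$ and at $v$ agree with those prescribed by $\embedding$, and the triangles $u\,v\,w'$ for common neighbours $w'$ are thin slivers disjoint from every other edge. The segment $uv$ lies in $d_i$ by construction, so $\Cluster$ has a $\DiskMap$-framed straight-line drawing homeomorphic to $\embedding$.

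\paragraph{Main obstacle.}
The delicate point is the geometric argument in the last step: one must choose the direction $\vec r$ and the scale $\delta$ so that \emph{simultaneously} (a) the cyclic orders at $u$ and $v$ realise $\embedding$, (b) the split segments $uw'$, $vw'$ at each common neighbour do not enclose any other vertex or edge, and (c) every inter-cluster segment $vx$ still crosses $\partial d_i$ exactly once and enters the correct pipe. The hypothesis that $uv$ lies in no separating triangle is exactly what keeps the set of common neighbours confined to the triangular faces at $uv$ and hence guarantees that the two arcs in the cyclic order at $w$ are separated by an open angular gap in which $v$ can be safely placed.
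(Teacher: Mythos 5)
Your proposal is correct and follows essentially the same route as the paper: both take the straight-line drawing of $\Cluster/e$, use the no-separating-triangle hypothesis to confine the common neighbours of $u$ and $v$ to the (at most two) face neighbours of $uv$, choose a small neighbourhood of the contracted vertex inside its cluster disk in which the drawing is stable under perturbation, and place $v$ there in the angular region prescribed by the rotation at the contracted vertex before reconnecting with straight segments. The paper phrases the placement via the region of the small disk cut off by the two edges to the common neighbours rather than via an explicit direction $\vec r$ and scale $\delta$, but this is only a cosmetic difference.
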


\begin{proof} 
	Let $e=uv$ and denote by $u_0, u_1, \dots, u_k$ the neighbors of $u$ and $v_0,
	v_1, \dots, v_l$ the neighbors of $v$ in $\Cluster$. Without loss of
	generality, we assume that $u_0 = v$ and $v_0 = u$. Since $e$ is not an edge
	of a separating triangle the set $I := \{u_2, \dots, u_{k-1}\} \cap \{v_2,
	\dots, v_{l-1}\}$ is empty.
	Denote by $u$ the vertex obtained by the contraction of $e$. 
	Let $G_i$ be the cluster of $u$ and $v$,  and let $\Disk_i$ be the
	corresponding disk in $\DiskMap$. 

	Consider a $\DiskMap$-framed drawing $\Gamma/e$ of $\Cluster/e$ homeomorphic
	to $\embedding / e$.  Then there is a small disk $d_u \subset \Disk_i$ around
	$u$ such that for every point $p$ in $d_u$ moving $u$ to $p$ yields a
	$\DiskMap$-framed drawing that is homeomorphic to~$\embedding/e$.

	We obtain a straight-line drawing $\Gamma$ of $\Cluster$ from $\Gamma/e$ as
	follows. First, we remove the edges $uv_i$ from $\Gamma / e$.  The edges $u_1,
	u_k$ partitions $d_u$ into two regions $r_u, r_v$ such that the intersection
	of $r_v$ with $uu_i$ is empty for all $i \in \{2, \dots, k-1\}$. We place $v$
	in $r_v$ and connect it to $u$ and the vertices $v_1, \dots, v_l$. 
	Since $r_v$ is a subset of $d_u$ and $I = \emptyset$, we have that the new
	drawing $\Gamma$ is planar. Since $v$ is placed in $r_v$, the edge $uv$ is in
	between $u_1$ and $u_k$ in the rotational order of edges around $u$. Hence,
	$\Gamma$ is homeomorphic to $\embedding$.  Finally, $\Gamma$ is a
	$\DiskMap$-framed drawing since, $d_u$ is entirely contained in $\Disk_i$ and
	thus are $u$ and $v$.
\end{proof}

\begin{lemma} \label{lemma:draw_with_outerface}
	Let $\Cluster$ be a connected simple clustered graph with a triangular outer
	face $T$, a planar disk arrangement $\DiskMap$, and a $\DiskMap$-framed
	embedding $\embedding$. Moreover, let $\Gamma_T$ be a $\DiskMap$-framed
	drawing of $T$. Then $\Cluster$ has a $\DiskMap$-framed drawing that is
	homeomorphic to $\embedding$ with the outer face drawn as $\Gamma_T$.
\end{lemma}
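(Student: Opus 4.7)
The plan is to prove the lemma by induction on the number of intra-cluster edges of $\Cluster$, with Lemma~\ref{lem:contraction} carrying the main reduction step.

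For the base case, $\Cluster$ has no intra-cluster edges; since each cluster is connected, each cluster is a single vertex. I place the three outer vertices as in $\Gamma_T$ and every other vertex at an arbitrary point of its disk. By~(P2) every non-outer disk is disjoint from the three pipes of $T$, and the edges of $\Gamma_T$ lie inside those pipes; together with the fact that the $\DiskMap$-framed embedding $\embedding$ places every non-outer disk topologically inside $T$, this forces each such disk to lie entirely in the interior of $\Gamma_T$. Each edge then becomes a straight segment inside its convex pipe $p_{ij}$, which by~(P2) avoids every disk other than $d_i$ and $d_j$ and by~(C1) does not cross any pipe belonging to an edge with pairwise distinct endpoints; edges sharing a vertex meet only there. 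The resulting drawing is therefore planar, $\DiskMap$-framed, and homeomorphic to~$\embedding$.

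For the inductive step, assume $\Cluster$ has at least one intra-cluster edge and that the claim holds for all instances with strictly fewer intra-cluster edges (or, in case of ties, fewer vertices). If some intra-cluster edge $e$ is not on a separating triangle, then $\Cluster / e$ is a connected simple clustered graph with the same $\DiskMap$, the induced $\DiskMap$-framed embedding $\embedding / e$, outer face $T$, and strictly fewer intra-cluster edges; the induction hypothesis supplies a $\DiskMap$-framed drawing with outer face $\Gamma_T$, and Lemma~\ref{lem:contraction} expands it to the desired drawing of~$\Cluster$. Otherwise every intra-cluster edge of $\Cluster$ lies on some separating triangle, and I pick an innermost separating triangle $T'$ carrying an intra-cluster edge (no separating triangle lies strictly inside $T'$). $T'$ splits $\Cluster$ into two connected simple clustered subgraphs $\Cin$ and $\Cout$ with triangular outer faces $T'$ and $T$, respectively. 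I recursively draw $\Cout$ with outer face $\Gamma_T$, extract from it a $\DiskMap$-framed drawing $\Gamma_{T'}$ of $T'$, recursively draw $\Cin$ with outer face $\Gamma_{T'}$, and glue the two drawings along $T'$.

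The main obstacle lies in the second case of the inductive step. Both $\Cin$ and $\Cout$ have strictly fewer vertices than $\Cluster$, so a lexicographic induction on the pair (intra-cluster edges, vertices) is needed to ensure the induction hypothesis applies to both subproblems even when the intra-cluster edge count does not drop. Choosing $T'$ innermost keeps the separating-triangle structure inside $\Cin$ manageable, so that Case~1 of the inductive step can eventually resolve it. Finally, the glued drawing is planar, homeomorphic to $\embedding$, and $\DiskMap$-framed because each piece respects its induced embedding and $T'$ is drawn identically on both sides; a geometric check analogous to the base case shows that every non-outer disk of $\Cin$ lies in the interior of $\Gamma_{T'}$.
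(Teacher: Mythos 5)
Your overall strategy matches the paper's: induction on the number of intra-cluster edges, with Lemma~\ref{lem:contraction} handling contractible edges and a split along a separating triangle otherwise. However, your case analysis has a genuine gap: you never deal with intra-cluster edges that lie \emph{on the boundary of the prescribed outer face}. Such an edge cannot be contracted, because contracting an edge of the outer triangle $T$ destroys $T$ (and hence the prescribed drawing $\Gamma_T$), so your Case~1 claim that $\Cluster/e$ still has ``outer face $T$'' fails for it; and such an edge need not lie on any separating triangle, so Case~2 does not absorb it either. Your base case (``no intra-cluster edges at all'') does not cover it. This situation is not hypothetical in your own recursion: in Case~2 you choose a separating triangle $T'$ \emph{carrying an intra-cluster edge} and recurse on $\Cin$ with outer face $T'$, so the recursive call for $\Cin$ immediately presents an intra-cluster edge on the boundary of its outer face. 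The paper avoids this by (i) taking as its base case the situation where \emph{every} intra-cluster edge lies on the boundary of the outer face (placing all interior vertices at the centers of their disks and invoking convexity of $\Gamma_T$ together with planarity of $\DiskMap$), and (ii) only ever contracting intra-cluster edges that are neither on the outer face nor on a separating triangle. You would need to enlarge your base case accordingly, or otherwise argue that outer-face intra-cluster edges can be left untouched.

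Two smaller remarks. First, your lexicographic measure (intra-cluster edges, then vertices) is actually \emph{more} careful than the paper's plain induction on intra-cluster edges: when splitting at a separating triangle, $\Cin$ and $\Cout$ need not both lose an intra-cluster edge, but both do lose vertices, so your measure makes the appeal to the induction hypothesis legitimate where the paper is silent. Second, your base-case planarity argument via (P2) and (C1) is essentially the paper's convexity argument and is fine, since with singleton clusters each pipe carries at most one edge.
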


\begin{proof}
	We prove the theorem by induction on the number of intra-cluster edges.
	
	First, assume that every intra-cluster edge of $\Cluster$ is an edge on the
	boundary of the outer face. Let $\Gamma$ be the drawing obtained by placing
	every interior vertex on the center point of its corresponding disk and draw
	the outer face as prescribed by $\Gamma_T$. Since $\DiskMap$ is a
	planar disk arrangement and $\Gamma_T$ is convex, the resulting drawing is
	planar and thus a $\DiskMap$-framed drawing of $\Cluster$ that is homeomorphic
	to the embedding $\embedding$.

	Let $S$ be a separating triangle of $\Cluster$ that splits $\Cluster$ into two
	subgraphs $\Cin$ and $\Cout$ so that $\Cin \cap \Cout = S$ and the outer face
	$\Cout$	and $\Cluster$ coincide. Then by the induction hypothesis $\Cout$ has
	the $\DiskMap$-framed drawing $\Dout$ with the outer face drawn as $\Gamma_T$
	and $\Cin$ as a $\DiskMap$-framed drawing $\Din$ with the outer face drawing
	as $\Dout[S]$, where $\Dout[S]$ is the drawing of $S$ in $\Dout$.  Then we
	obtain a $\DiskMap$-framed drawing of $\Cluster$ by merging $\Din$ and
	$\Dout$.  

	Consider an intra-cluster edge $e$ that does not lie on the boundary of the
	outer face and is not an edge of a separating triangle. Then by the induction
	hypothesis, $\Cluster/e$ has a $\DiskMap$-framed drawing with the outer face
	drawn as $\Gamma_T$. It follows by Lemma~\ref{lem:contraction} that $\Cluster$ has a
	$\DiskMap$-framed drawing homeomorphic to $\embedding$.
\end{proof}

\begin{theorem}
	\label{thm:draw_planar}
	Every simple clustered graph $\Cluster$ with a $\DiskMap$-framed embedding
	$\embedding$ has a $\DiskMap$-framed drawing homeomorphic to $\embedding$.
\end{theorem}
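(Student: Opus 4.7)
The plan is to reduce Theorem~\ref{thm:draw_planar} to Lemma~\ref{lemma:draw_with_outerface} by augmenting $\Cluster$, $\embedding$, and $\DiskMap$ in two steps: first, make every cluster connected; second, make the outer face a triangle. Once both reductions are in place, Lemma~\ref{lemma:draw_with_outerface} produces a $\DiskMap$-framed straight-line drawing of the augmented clustered graph, and discarding the added vertices, edges, and auxiliary disks yields a $\DiskMap$-framed drawing of $\Cluster$ homeomorphic to $\embedding$.

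For the connectedness reduction, I would consider any cluster $G_i$ with components $C_1, \dots, C_r$ and restrict $\embedding$ to the closed disk $\Disk_i$. Because $\embedding$ is $\DiskMap$-framed and (P2) forbids any pipe $p_{jk}$ from meeting $\Disk_i$ when $i \notin \{j,k\}$, the only portions of $\embedding$ inside $\Disk_i$ are the intra-cluster edges of $G_i$ together with short stubs of the inter-cluster edges incident to $V_i$, each ending on $\partial \Disk_i$. This gives a plane subdivision of $\Disk_i$ in which some face is incident to two distinct components, and I would insert a new intra-cluster edge through that face, repeating until $G_i$ is connected. Each such edge stays inside $\Disk_i$, so the augmented embedding remains $\DiskMap$-framed; since no cluster other than $G_i$ meets $\Disk_i$, the augmented clustered graph also stays simple.

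To triangulate the outer face I would add three auxiliary singleton clusters placed far outside $\DiskMap$, join the three new vertices by a large triangle enclosing $\embedding$, and connect one of them to an arbitrary vertex of the original outer face along a route avoiding every existing pipe. With the three auxiliary disks sufficiently far away, the augmented disk arrangement still satisfies (C1), (C2), (P1), and (P2), the augmented embedding is framed by the new arrangement, and its outer face is the triangle on the three auxiliary vertices. Applying the connectedness reduction to this enlarged instance and invoking Lemma~\ref{lemma:draw_with_outerface} then yields the desired straight-line realisation.

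The step I expect to be the main obstacle is ensuring that the intra-cluster edges added during the connectedness reduction do not end up as edges of separating triangles, since otherwise Lemma~\ref{lem:contraction} (used inside the proof of Lemma~\ref{lemma:draw_with_outerface}) cannot be applied to them. The cleanest workaround is to subdivide every newly added edge by a dummy vertex assigned to the same cluster $V_i$, so that this vertex has degree two and cannot lie on any triangle at all; combined with a careful choice of which face of the in-disk subdivision is used for each insertion, and using simplicity to rule out a third cluster inside $\Disk_i$, this should complete the proof.
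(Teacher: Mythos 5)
Your proposal is correct and follows essentially the same route as the paper: both reduce Theorem~\ref{thm:draw_planar} to Lemma~\ref{lemma:draw_with_outerface} by enclosing the instance in a triangle of three new singleton clusters with small far-away disks and by augmenting every cluster to be connected, then restricting the resulting drawing to the original graph; the only real difference is that the paper delegates the connectivity augmentation to Feng et al.\ while you construct the intra-disk augmentation edges explicitly. Your worry about the new edges lying on separating triangles is moot, since the proof of Lemma~\ref{lemma:draw_with_outerface} handles separating triangles by splitting and recursing rather than by contraction, so the subdivision trick is unnecessary.
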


\begin{proof}
 We obtain a clustered graph $\Cluster'$ from $\Cluster$ by adding a new
 triangle $T$ to the graph and assigning each vertex of $T$ to is own cluster.
 Let $\Gamma_T$ be a drawing of $T$ that contains all disks in $\DiskMap$
 in its interior. We obtain a new disk arrangement $\DiskMap'$ from $\DiskMap$ by
 adding a sufficiently small disk for each vertex of $\Gamma_T$. The embedding
 $\embedding$ together with $\Gamma_T$ is a $\DiskMap'$-framed embedding
 $\embedding'$ of $\Cluster'$.

 According to Feng. et al.~\cite{DBLP:conf/esa/FengCE95} there is a simple
 connected clustered graph $\Cluster''$ that contains $\Cluster'$ as a subgraph
 whose embedding $\embedding''$ is $\DiskMap$-framed and contains $\embedding'$.
 %i %contained in $\embedding'$.
 By Lemma~\ref{lemma:draw_with_outerface} there is a
 $\DiskMap$-framed drawing $\Gamma''$ of $\Cluster''$ homeomorphic to
 $\embedding''$ with the outer face drawn as $\Gamma_T$. The drawing $\Gamma''$
 contains a $\DiskMap$-framed drawing of $\Cluster$.
\end{proof}

\section{Drawing on General Disk Arrangements} \label{sec:hardness} We study the
following problem referred to as \textsc{\clusterDrawing}.
Given a planar clustered graph $\Cluster=(G, \mathcal V)$, a disk arrangement
$\DiskMap$ that is not planar, i.e., $\DiskMap$ satisfies condition (C1) and
(C2) but not (P1) and (P2), and a $\DiskMap$-framed embedding $\embedding$ of
$G$, is there a $\DiskMap$-framed straight-line drawing $\Gamma$ that is
homeomorphic to $\embedding$ and $\DiskMap$?
%
%This requires that in the transition from $\embedding$ to $\Gamma$ at any point
%in time an edge $uv$ does not intersect a geometric object other than its own
%clusters.
%
Note that if the disks $\DiskMap$ are allowed to overlap (condition (P1)) and
$G$ is the intersection graph of $\DiskMap$, the problem is known to be
$\cNP$-hard~\cite{10.1007/978-3-319-62127-2_7}. Thus, in the following we
require that the disks do not overlap, but there can be disk-pipe intersections,
i.e, $\DiskMap$ satisfies conditions (C1), (C1) and (P1) but not (P2).
By Alam at al.~\cite{JGAA-367} it follows that the problem restricted to thin
touching rectangles instead of disks is $\cNP$-hard.  We strengthen this result
and prove that in case that the rectangles are axis-aligned squares and are not
allowed to touch the problem remains $\cNP$-hard.  Our illustrations contain
blue dotted circles that indicate how the square in the proof can be replaced by
disks.  For the entire proof we refer to Appendix~\ref{apx:hardness}.

To prove $\cNP$-hardness we reduce from \textsc{Planar Monotone
3-SAT}~\cite{DBLP:journals/ijcga/BergK12}. For each literal and clause we
construct a clustered graph $\Cluster$ with an arrangement of squares $\DiskMap$ of
$\Cluster$ such that each disk contains exactly one vertex.  We refer to these
instances as \emph{literal} and \emph{clause gadgets}.  In order to transport
information from the literals to the clauses, we construct a \emph{copy} and
\emph{inverter gadget}.  The design of the gadgets is inspired by Alam et
al.~\cite{JGAA-367}, but due to the restriction to squares rather than
rectangles, requires a more careful placement of the geometric objects.  The
green  and red regions in the figures of the gadget correspond to
\emph{positive} and \emph{negative} drawings of the literal gadget.  The green
and red line segments indicate that for each truth assignment of the
variables our gadgets indeed have $\DiskMap$-framed straight-line drawings. Negative versions
of the literal and clause gadget are obtained by mirroring vertically. Hence,
we assume that variables and clauses are positive.  Each gadget covers a set of
checkerboard cells. This simplifies the assembly of the gadgets for the
reduction.

An \emph{obstacle of a pipe $p_{ij}$} is a disk $d_k, i,j \not=k,$ that
intersects $p_{ij}$.  The \emph{obstacle number of a pipe $p_{ij}$} is the
number of obstacles of $p_{ij}$. Let $P = \{p_{ij} \mid V_i \times V_j \cap E
\not= \emptyset\}$.  The \emph{obstacle number of a disk arrangement $\DiskMap$}
is maximum obstacle number of all pipes $p_{ij}$ with $V_i \times V_j \cap E
\not= \emptyset$.

\begin{figure}[tb] \centering \includegraphics[page=3]{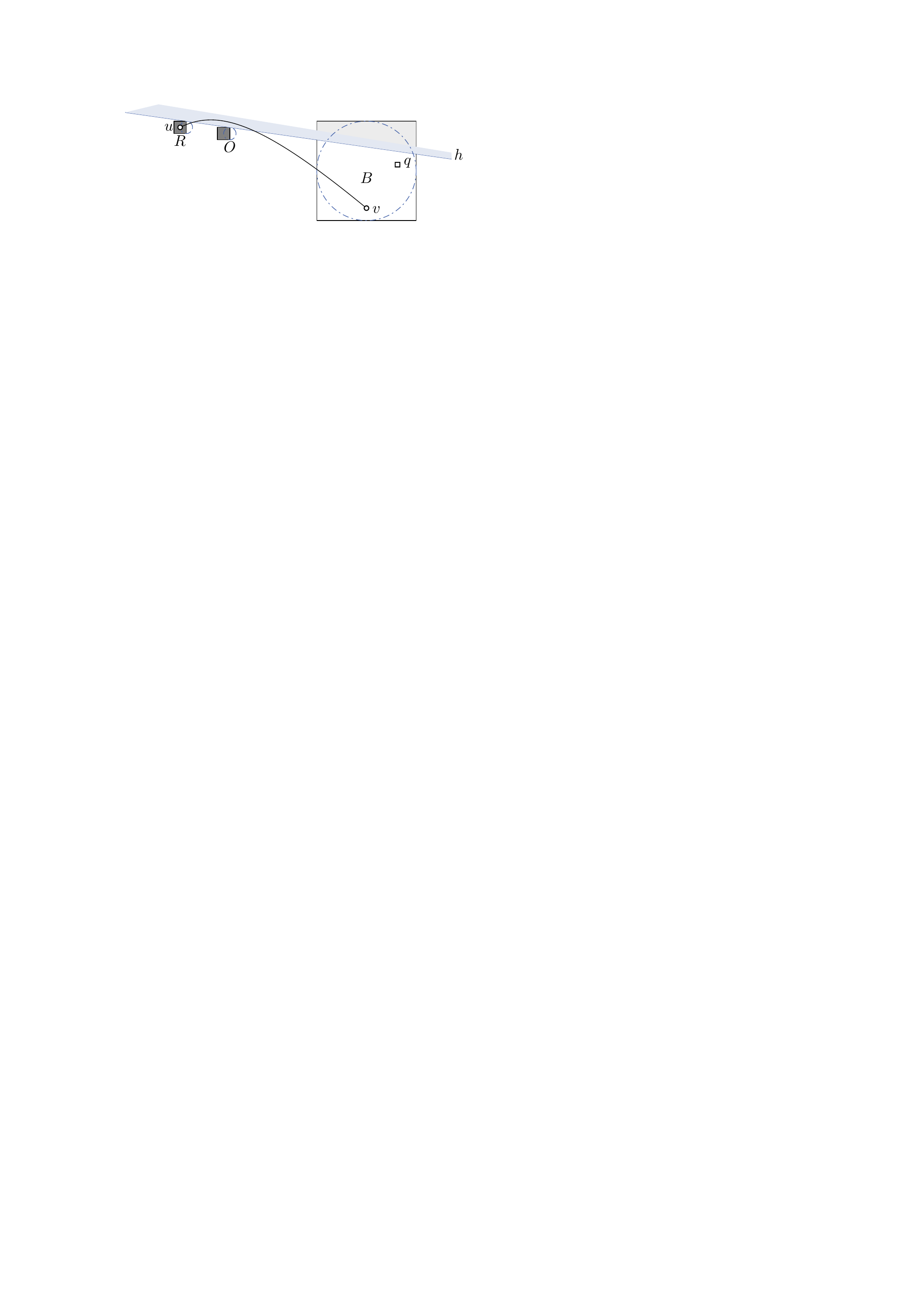}
	\caption{Regulator} \label{fig:sec:regulator} \end{figure}

\myparagraph{Regulator}
The \emph{regulator gadget} restricts the feasible placements of a  vertex $v$
that lies in the interior of a square $B$; refer to
Fig.~\ref{fig:sec:regulator}. Let $h_1, h_2$ be two half planes such
that the intersection $q$ of their supporting lines lies in $B$. 
In a $\DiskMap$-framed drawing of the regulator gadget the placement of $v$ is
restricted by a half plane $h$ that excludes a placement of $v$ in $h_1 \cap
h_2$ but allows for a placement in $h_1 \cap B$ or $h_2 \cap B$. We refer to  $h
\cap B$ as the \emph{regulated region of $B$}.
%Since $q$ lies the interior of $B$, the half plane $h$ can be chosen such that
%$q\not\in h$ and for each $i=1,2$, $h \cap h_i \cap B\not=\emptyset$.

\myparagraph{Literal Gadget}
The \emph{positive literal gadget} is depicted in
Fig.~\ref{fig:sec:hardness:literal}.  The \emph{center block} is a unit square $C$
with corners $\alpha_1, \alpha_2, \alpha_3, \alpha_4$ in clockwise order.  For
each corner $\alpha_i$ of $C$ consider a line $l_i$ that is tangent to $C$ in
$\alpha_i$, i.e, $l_i \cap C = \{\alpha_i\}$. Let $p_i$ be the intersection of
lines $l_{i-1}$ and $l_{i}$ where $l_0 = l_4$; refer to
Fig.~\ref{fig:sec:hardness:literal:constr}.  Let $R_1, \dots, R_4$ be four pairwise
non-intersecting squares that are disjoint from $C$ such that $R_i$ contains
$p_i$ in its interior.  We add a cycle $v_1v_2v_3v_4v_1$ such that $v_i \in
R_i$. We refer to the vertex $v_i$ as the \emph{cycle vertex} of the \emph{cycle
block $R_i$}. For each $i$, let $j_i$ be a half plane that contains $R_{i+1}$
but does not intersect $C$.  We place a regulator $W_i$ of $v_i$  with respect
to $h_{i-1}$ and $h_{i}$ and position it such that it lies in $j_i$, where $h_i$
is the half plane spanned by $l_i$ with $C \not\subseteq h_i$. 

\begin{figure}[t]
	\centering
	\subfloat[]{
		\includegraphics[page=4]{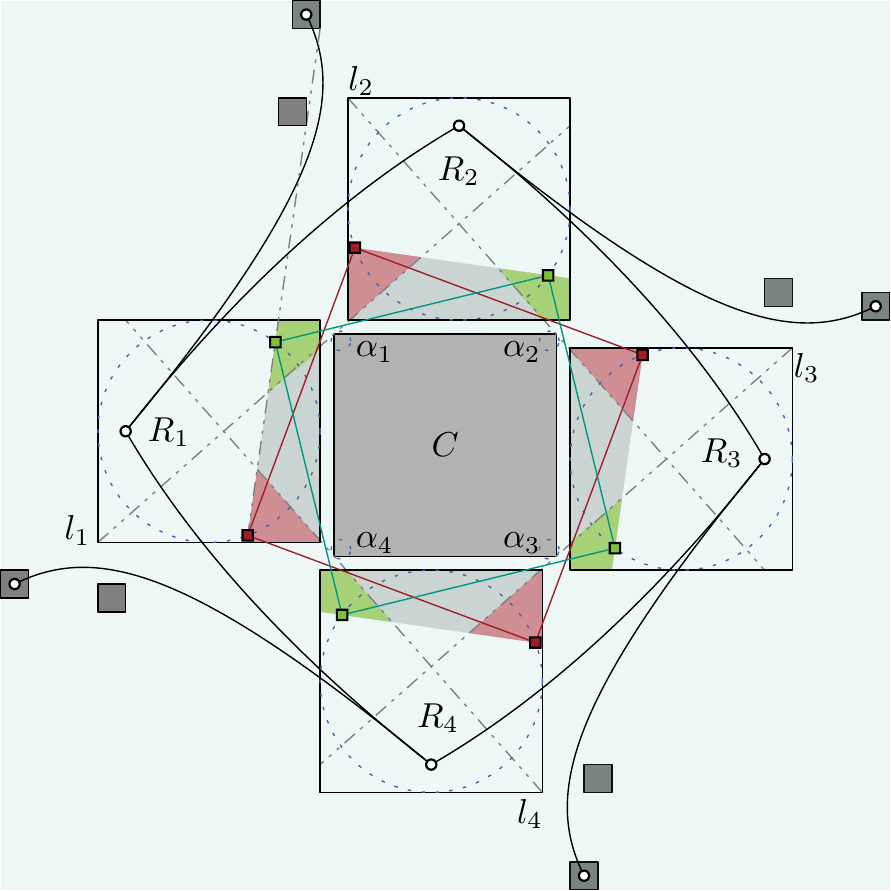}
	}
	\quad\quad
	\subfloat[\label{fig:sec:hardness:literal:constr}]{
		\includegraphics[page=5]{fig/variable.pdf}
	}
	\caption{Literal gadget}
	\label{fig:sec:hardness:literal}
\end{figure}

We now describe the two combinatorially different realizations of the literal
gadgets. 
%Let $h_i$ be the half plane spanned by $l_i$ that does not contains $C$.
Consider $R_1$ and its two adjacent squares $R_2$ and $R_4$.  Let $Q_i$ be the
regulated region of $R_i$ with respect to $W_i$.  We refer to  $\overline{h_2}
\cap \overline{h_4} \cap Q_1$ as the \emph{infeasible
region of $R_1$},
where $\overline{h_i}$ denotes the complement of $h_i$.  The intersection $h_1
\cap Q_1$ is the \emph{positive region $P_1$ of $R_1$}.  The region
$\overline{h_4} \cap Q_1$ is the \emph{negative region $N_1$ of $R_1$}.  All
these regions are by construction not empty. 
The positive, negative and infeasible region of $R_i, i \not= 1$ are defined
analogously.

\newcommand{\PropertyNoInfeasibleDrawing}{
	If $\Gamma$ is a $\DiskMap$-framed drawing of a positive (negative) literal
	gadget, then no cycle vertex $v_i$ lies in the infeasible region of $R_i$.
	Moreover, either each cycle vertex $v_i$ lies in the positive region~$P_i$ or
	each vertex $v_i$ lies in the negative region~$N_i$. 
}
\begin{property} \label{lemma:no_infeasible_drawing}
	\PropertyNoInfeasibleDrawing
\end{property}

\newcommand{\PropertyValidLiteral}{
	The positive and negative placements induce a $\DiskMap$-framed drawing of the
	literal gadget, respectively.
}
\begin{property} \label{lemma:valid_literal_gadget}
	\PropertyValidLiteral
\end{property}

\myparagraph{Copy and Inverter Gadget} The copy gadget in
Fig.~\ref{fig:sec:hardness:copy} connects two positive literal gadgets $X$ and $Y$
such that a drawing of $X$ is positive if and only if the drawing of $Y$ is
positive. The inverter gadget connects a positive literal gadget $X$ to a
negative literal gadget $Y$ such that the drawing of $X$ is positive if and only
if the drawing of $Y$ is negative. The construction of the gadgets uses ideas
similar to the construction of the literal gadget. In contrast to the literal
gadget, we replace the center block by four squares. 
%The allows a
%flexible placement of $B$ and $B'$.

\begin{figure}[tb]
	\centering
	\includegraphics[page=4]{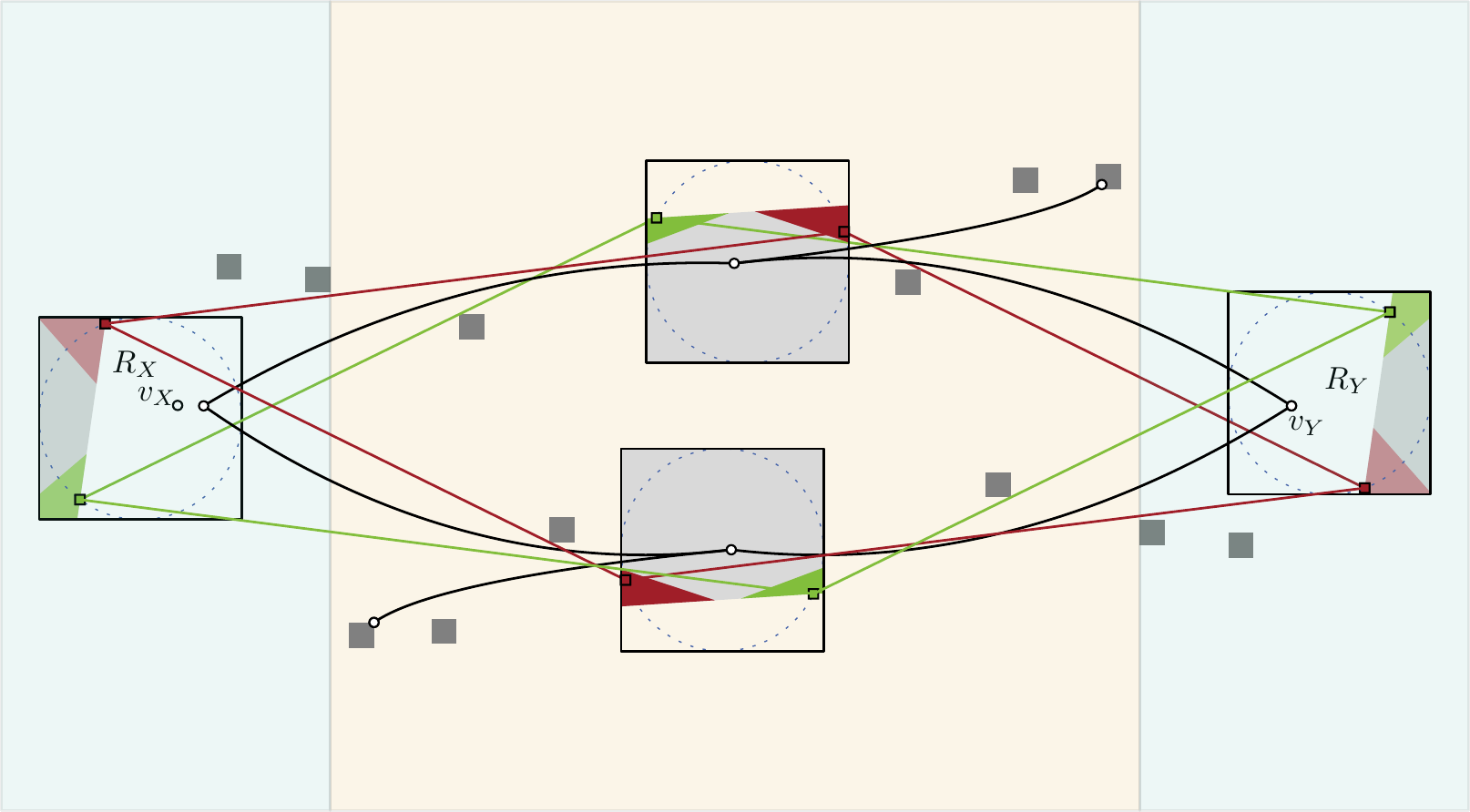}
	\caption{Copy gadget}
	\label{fig:sec:hardness:copy}
\end{figure}

\newcommand{\PropertyConsistentCopyGadget}{
	Let $\Gamma$ be a $\DiskMap$-framed drawing of two positive (negative) literals
	gadgets $X$ and $Y$ connected by a copy gadget. Then the $\DiskMap$-framed
	of $X$ in $\Gamma$ is positive if and only if the $\DiskMap$-framed
	drawing of $Y$ is positive.
}

\begin{property}
	\label{lemma:consistent_copy_gadget}
	\PropertyConsistentCopyGadget
\end{property}

\newcommand{\PropertyValidCopyInverter}{
	The positive (negative) placement of two literals gadgets $X, Y$ induces a
	$\DiskMap$-framed drawing of a copy [inverter] gadget  that connects $X$ and
	$Y$.
}

\begin{property} \label{lemma:valid_copy_inverter_gadget}
	\PropertyValidCopyInverter
\end{property}

\iffalse
\myparagraph{Clause Gadget}
We construct a \emph{clause gadget} with respect to three positive literal
gadgets $X, Y,Z$ as arranged as in Fig.~\ref{fig:sec:hardness:clause}.  The
clause gadget transports the information of each literal $A$ with a
\emph{transition block $T_A$} to the vertex $k$ in the interior of the
\emph{clause block $K$}.  The blue half-plane $h_A, A\in\{A, B\}$ in the figure
indicates that if $k \in \overline{h_A}$ the drawing of literal gadget $A$
cannot be negative.  Since $h_X \cap h_Y \cap  h_Z = \emptyset$, there is no
$\DiskMap$-framed drawing such that all literals have a negative drawing at the
same time.  For each $A, B \in \{X, Y, Z\}, A\not=B$, let $q^+_A, q^-_A$ and
$q_{A, B}$ be the points as depicted in the Fig.~\ref{fig:sec:hardness:clause}.
The gadget is constructed so that for each $y \in \{q^-_Y, q^+_Y\}$ and $z \in
\{q^-_Z, q^+_Z\}$ the points $y, z$ and $q_{Y, Z}, q^{+X}$ induce a
$\DiskMap$-framed drawing of the clause gadget. The analog statement for $q_{X,
Z}, q_{X, Y}$ is also true.

\begin{figure}[tb]
	\centering
	\includegraphics[page=6]{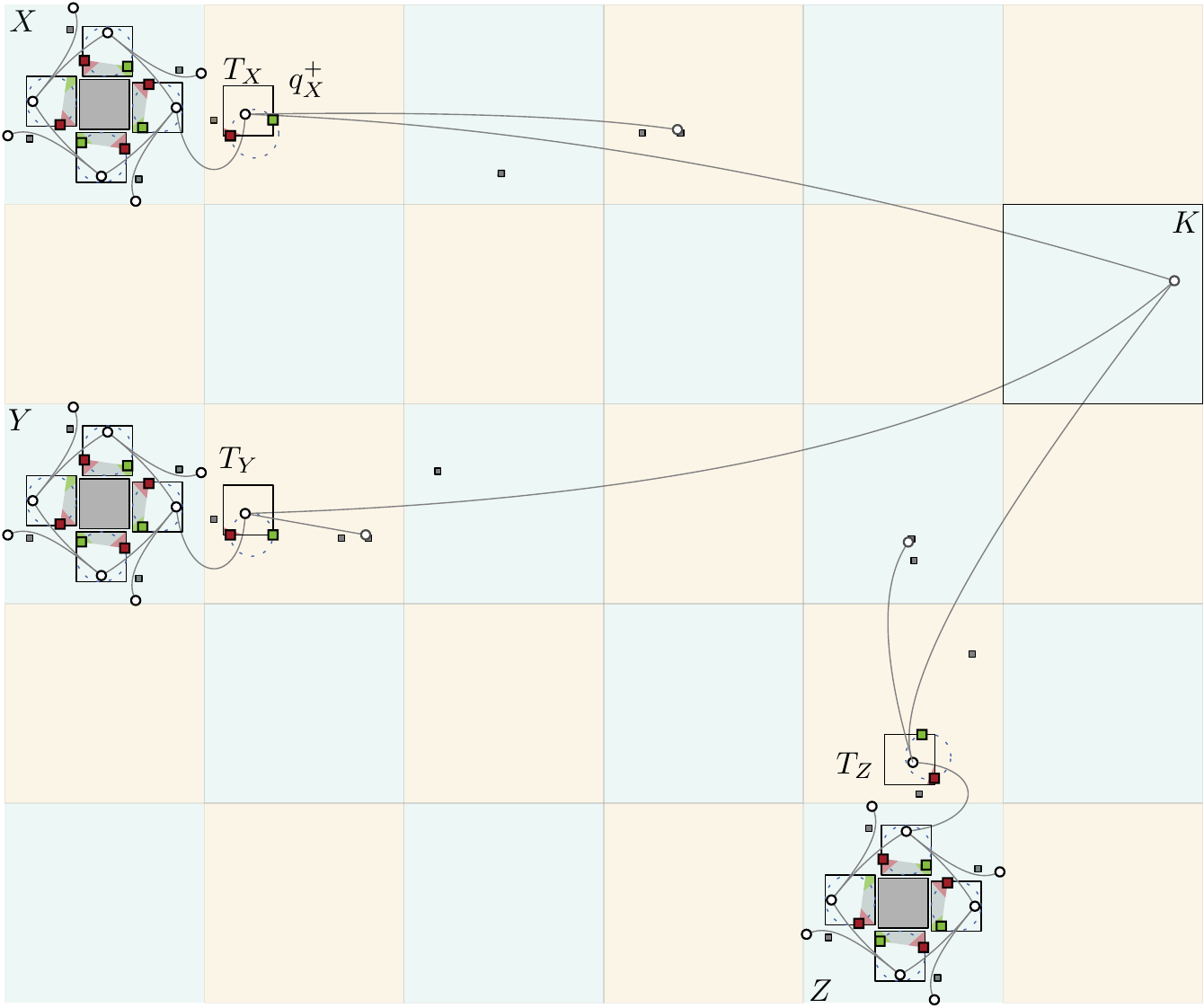}	
	\caption{Construction of the clause gadget.}
	\label{fig:sec:hardness:clause}
\end{figure}

\newcommand{\PropertyValidClause}{
	There is no $\DiskMap$-framed drawing of the clause gadget such that the
	drawing of each literal gadget is negative. For all other 
	%remaining
	combinations of
	positive and negative drawings of the literal gadgets 
	%$X, Y$ and $Z$ 
	there is
	a $\DiskMap$-framed drawing of the clause gadget.
}

\begin{property} \label{lemma::valid_clause_gadget}
	\PropertyValidClause
\end{property}

\else

\label{sec:clause}
%\begin{figure}[tb]
%  \centering
%    \includegraphics[page=1, width=.99\textwidth]{fig/clause.pdf}
%    \caption{Clause gadget.}
%  \label{fig:hardness:clause}
%\end{figure}

%We refer to  $P_u$ and $N_u$ as the
%\emph{positive}  and \emph{negative $N_u$ regions of $u$} and correspondingly to
%\emph{positive $p^+_u$} and \emph{negative $p^-_u$  placements of $u$}. 

\myparagraph{Clause Gadget}
We construct a \emph{clause gadget} with respect to three positive literal
gadgets $X, Y,Z$ arranged as depicted in Fig.~\ref{fig:sec:clause_block}. The
negative clause gadget, i.e., a clause with three negative literal gadgets, is
obtained by mirroring vertically.

We construct the clause gadget in two steps. First, we place a \emph{transition
block} $T_A$ close to each literal gadget $A \in \{X, Y, Z\}$. In the second
step, we connect the transition block to a vertex $k$ in a \emph{clause block
$K$} such that for every placement of $k$ in $K$ at least one drawing of the
literal gadgets has to be positive.

\begin{figure}[tb]
	\centering
	\includegraphics[page=2]{fig/clause.pdf}	
	\caption{Construction of the transition block.}
	\label{fig:sec:transition_block}
\end{figure}

Consider the literal gadget $X$ and let $R_X$ be the right-most cycle block of
$X$.  Let $h_X$ be a negative half plane of $R_X$, i.e., $h_X$ contains the
positive region $P_X$ but not the negative region $N_X$, refer to
Fig.~\ref{fig:sec:transition_block}. We now place a transition block $T_X$ such
that the intersection $T_X \cap h_X$ has small area.  Further, let $p^+_X$ and
$p^-_X$ be the positive and negative placements of $X$, respectively. Let
$q^-_X$ be a point in $T_X \cap h_X$. Let $i$ be the intersection point of the
supporting line $l_X$ of $h_X$ and the line segment $p^-_Xq^-_X$. We place an
obstacle $O^1_X$ such that $l_X$ is tangent to $O^1_X$ in point~$i$.  Finally,
we place a \emph{transition vertex} $t_X$ in the interior of $T_X$ and route the
edge $v_Xt_X$ through $h_X \cup T_X \cup R_X$, where $v_X \in R_X$.

Consider a half plane $h'_X$  such that $O^1_X \not\subseteq h'_X$ and $N_X
\not\subseteq h'_X$ and such that the supporting line $l'_X$ of $h'_X$ contains
$p^+_X$ and is tangent to $O^1_X$. Let $q^+_X$ be a point $h'_X \cap R_X$.
Observe that for $q^+_X$ and $q^-_X$ there is a positive and negative drawing of $X$,
respectively.  Further, if $X$ has a negative drawing, then $t_X$ lies in the
region $h_X \cap T_X$. In the following, we refer to $h_X \cap T_X$ as the
\emph{negative region} of $T_X$. The transition blocks of $Y$ and $Z$ are
constructed analogously with only minor changes. The transition block $T_Z$ of
$Z$ is constructed with respect to the top-most cycle block. Note that we can
choose the points $q^+_A, A \in \{X, Y, Z\}$ independent from each other as long
as each of them induces a positive drawing the literal gadget $A$.

Denote by $\xmax S$ the maximum $x$-coordinate of a point in a bounded set $S
\subset \R^2$.  Note that $\xmax D_X\cap h_X > \xmax T_X \cap h_X$, refer to
Fig.~\ref{fig:sec:transition_block}. To ensure that our construction remains
correct for disks we add a regulator $R$ with a respect a half plane $g$ such
that $\xmax D_X\cap h_x \cap g = \xmax T_X \cap h_x \cap g$ and $g$ contains
$q^+_X, q^-_X$.

\begin{figure}[tb]
	\centering
	\includegraphics[page=6]{fig/clause.pdf}	
	\caption{Construction of the clause block.}
	\label{fig:sec:clause_block}
\end{figure}

Given the placement of the transition block $T_X, T_Y$ and $T_Z$ as depicted in
Fig.~\ref{fig:sec:clause_block}, we construct the \emph{clause block $K$} as
follows. We choose a point $q_{X,Y}$. Let $l^-_X$ and $l^-_Y$ be the lines
through the points $q^-_X, q_{X, Y}$, and $q^-_Y, q_{X, Y}$, respectively.
Further, consider a line $l^-_Z$ with $q^-_Z  \in l^-_Z$ such that the
intersection point $q_{A, Z} := l^-_Z \cap l^-_A, A \in \{X, Y\}$ lies in
between $q^-_A$ and $q_{X, Y}$. Further, let $l^+_X$ be the line through $q^+_X,
q_{Y, Z}$, $l^+_Y$ the line through $q^+_Y, q_{X, Z}$, and let $l^+_Z$ be the line through
$q^+_Z$ and $q_{X, Y}$. Let $h_A$ be a half plane that does not contain the
negative region $N_A$ and whose supporting line contains the intersection $i_A$
of $l^-_A$ and $l^+_A$. We place obstacles $O^2_A$ such that $O^2_A
\not\subseteq h_A$ and the supporting line of $h_A$ is tangent to $O^2_A$ in
point $i_A$. We place the clause box $K$ such that it contains $q_{X, Y},
q_{Y,Z}$, $q_{X,Z}$ and a new vertex $k$ in its interior. We finish the
construction by routing the edges $kt_A$ through $K \cup h_A \cup T_A, A \in
\{X, Y, Z\}$, where $t_A \in T_A$.

By construction we have that for each $y \in \{q^-_Y, q^+_Y\}$ and $z \in
\{q^-_Z, q^+_Z\}$ the points $y, z$ and $q_{Y, Z}$ induce a $\DiskMap$-framed
drawing. The analog statement for the points $q_{X, Z}$ and $q_{X, Y}$ is also
true.  Further, if $h_X \cap h_Y \cap  h_Z = \emptyset$, then there is no
$\DiskMap$-framed drawing such that each vertex $t_A$ lies on $q^-_A$.
Fig.~\ref{fig:sec:clause_block} shows that there is an arrangement of the clause
block and the obstacles such that $h_X \cap h_Y \cap  h_Z$ indeed is empty. 

\newcommand{\PropertyValidClause}{
	There is no $\DiskMap$-framed drawing of the clause gadget such that the
	drawing of each literal gadget is negative. For all other 
	%remaining
	combinations of
	positive and negative drawings of the literal gadgets 
	%$X, Y$ and $Z$ 
	there is
	a $\DiskMap$-framed drawing of the clause gadget.
}
\begin{property} \label{lemma::valid_clause_gadget}
	\PropertyValidClause
\end{property}

\fi

\myparagraph{Reduction}
We reduce from a planar monotone $3$-SAT instance $(U, C)$; refer to
Fig.~\ref{fig:sec:reduction}. We modify its rectilinear representation such that
each vertex and clause rectangle covers sufficiently many cells of a
checkerboard and each edge covers the entire column between its two endpoints.
We place positive literal gadgets in each blue cell of a rectangle corresponding
to a variable.  We place a clause gadget in each positive clause rectangle $R_c$
such that it is aligned with the right-most edge of $R_c$.  The literal gadget
$X$ of a variable $x$ is connected to its corresponding literal gadget $X'$ in
$R_c$ by a placing a literal gadget in each blue cell that is covered by the
$\Gamma$-shape that connects $X$ to $X'$; refer to
Fig.~\ref{fig:sec:reduction:mod}.  Finally, we place a copy gadget in each
orange cell between two literal gadgets of the same variable.  The negative
clauses are obtained by mirroring the modified rectilinear representation
vertically and repeating the construction for the positive clauses. To negate
the state of the variable we place the inverter gadget immediately below a
variable (red cells in Fig.~\ref{fig:sec:reduction:mod}).

\myparagraph{Correctness}
\begin{figure}[tb] \centering \subfloat[]{
		\includegraphics[page=1]{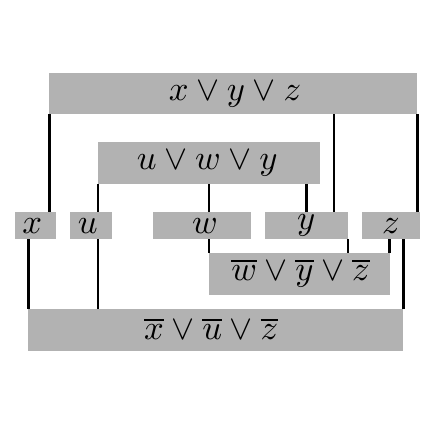} }
		\subfloat[\label{fig:sec:reduction:mod}]{
			\includegraphics[page=4]{fig/rectilinear_representation.pdf} }
			\caption{(a) Planar monotone 3-SAT instance $(U, C)$ with a rectilinear
				representation.  (b) Modified rectilinear representation of $(U, C)$ on
			a checkerboard.} \label{fig:sec:reduction} \end{figure}
 Assume that $(U, C)$ is satisfiable. Depending on whether a variable $u$ is
 true or false, we place all vertices on a positive placement of a positive
 literal gadget and on the negative placement of negative literal gadget of the
 variable.  By Property~\ref{lemma:valid_literal_gadget}, the placement induces
 a $\DiskMap$-framed drawing of all literal gadgets and
 Property~\ref{lemma:valid_copy_inverter_gadget} ensures the copy and inverter
 gadgets have a $\DiskMap$-framed drawing.  Since at least one variable of each
 clause is true, there is a $\DiskMap$-framed drawing of each
 clause gadget by Property~\ref{lemma::valid_clause_gadget}. 
 
 Now consider that the clustered graph $\Cluster$ has a $\DiskMap$-framed
 drawing. Let $X$ and $Y$ be two positive (negative) literal gadgets connected
 with a copy gadget. By Property~\ref{lemma:consistent_copy_gadget}, a drawing
 of $X$ is positive if and only if the drawing of $Y$ is positive.
 Property~\ref{lemma:consistent_copy_gadget}
 ensures that the drawing of a positive literal gadget $X$ is positive if and
 only if the drawing of the negative literal gadget $Y$ is negative, in case
 that both are joined with an inverter gadget.  Further,
 Property~\ref{lemma:no_infeasible_drawing} states that each cycle vertex lies
 either in a positive or a negative region.  Thus, the truth value of a variable
 $u$ can be consistently determined by any drawing of a literal gadget of $u$.
 By Property~\ref{lemma::valid_clause_gadget}, the clause gadget $K$ has no
 $\DiskMap$-framed drawing such that all drawings of the literal gadgets of $K$
 are negative.  Thus, the truth assignment indeed satisfies $C$.

\begin{theorem}
	The problem \textsc{\clusterDrawing} with axis-aligned squares is $\cNP$-hard
	even when the clustered graph $\Cluster$ is restricted to vertex degree 5 and
	the obstacle number of $\DiskMap$ is two.
\end{theorem}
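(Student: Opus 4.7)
The plan is to combine the pieces already assembled in the Reduction and Correctness paragraphs into a formal polynomial-time many-one reduction from \textsc{Planar Monotone 3-SAT}. Given an instance $(U,C)$, I first build its rectilinear representation on a checkerboard that is refined enough to accommodate one literal, copy, inverter or clause gadget per cell; since each gadget is of constant size and the rectilinear representation uses $O(|U|+|C|)$ cells in each dimension after refinement, the resulting clustered graph $\Cluster$, square arrangement $\DiskMap$ and $\DiskMap$-framed embedding $\embedding$ have size polynomial in $|U|+|C|$ and can be written down in polynomial time.

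For correctness I would just glue together the properties already proved. If $(U,C)$ is satisfiable, I pick the positive (resp.\ negative) placements on the positive (resp.\ negative) literal gadgets corresponding to true (resp.\ false) variables: Property~\ref{lemma:valid_literal_gadget} supplies a $\DiskMap$-framed drawing of every literal gadget, Property~\ref{lemma:valid_copy_inverter_gadget} extends this to every copy and inverter gadget along a variable column, and Property~\ref{lemma::valid_clause_gadget} realises each clause gadget because at least one literal is positive. Conversely, given any $\DiskMap$-framed drawing homeomorphic to $\embedding$, Property~\ref{lemma:no_infeasible_drawing} shows that each literal gadget is either entirely in its positive or its negative realisation; Property~\ref{lemma:consistent_copy_gadget} together with the symmetric statement for the inverter gadget shows these choices are consistent along each variable's column, so the drawing induces a well-defined truth assignment; and Property~\ref{lemma::valid_clause_gadget} forbids the all-negative configuration of any clause, giving a satisfying assignment.

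It then remains to verify the two quantitative claims. For the \emph{degree bound} I would inspect the gadgets cell by cell: the cycle vertices $v_i$ of a literal gadget have two neighbours on the $4$-cycle, one regulator edge, and at most two further edges connecting them to an adjacent copy, inverter, clause or transition gadget, yielding degree at most five; the transition vertices $t_A$, the clause vertex $k$, and all regulator and obstacle vertices are of smaller constant degree. For the \emph{obstacle number bound} I check pipe by pipe: within a literal gadget, a cycle-edge pipe is obstructed only by the centre block and the regulator's obstacle; the edges $v_Xt_X$ and $kt_A$ of the clause construction are by design tangent to $O^1_A$ and $O^2_A$ respectively, giving the maximum of two obstacles per pipe; and because each gadget is confined to its own checkerboard cell(s), no squares from neighbouring gadgets can enter a pipe.

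The main obstacle, and the reason the construction has to be more delicate than in Alam et al.~\cite{JGAA-367}, is this last global check: replacing thin touching rectangles by non-touching axis-aligned squares (and ultimately by the blue dotted disks) leaves very little slack, so verifying that the obstacle number stays at two requires arguing that the carefully chosen tangent placements in the regulator, transition and clause constructions are not disturbed when gadgets are composed on the checkerboard. I expect this geometric bookkeeping, rather than the logical reduction itself, to dominate the proof.
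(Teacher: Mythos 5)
Your proposal follows essentially the same route as the paper: a polynomial-time reduction from \textsc{Planar Monotone 3-SAT} onto a refined checkerboard, correctness obtained by gluing Properties~\ref{lemma:valid_literal_gadget}, \ref{lemma:valid_copy_inverter_gadget}, \ref{lemma:consistent_copy_gadget}, \ref{lemma:no_infeasible_drawing} and~\ref{lemma::valid_clause_gadget}, and a gadget-by-gadget check of the degree and obstacle bounds. The only (harmless) divergence is in the bookkeeping of where the bound of two obstacles is attained --- the paper charges obstacle number one to the literal and clause gadgets and two to the copy/inverter gadgets, whereas you locate the worst case in the cycle-edge and clause-edge pipes --- but both accountings stay within the claimed bounds.
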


%Our gadgets can be modified such that every square (disk) can be replaced by a square of unit size. This answers an open question of Angelini et al.~\cite{10.1007/978-3-662-45803-7_34}.  
%\begin{theorem}
%  The problem \textsc{\clusterDrawing} with unit squares (disks) is $\cNP$-hard.
%  %, even when the
%  %clustered graph is restricted to vertex degree 5.
%\end{theorem}

\section{Conclusion}

We proved that every clustered planar graph  with a planar disk arrangement
$\DiskMap$ and a $\DiskMap$-framed embedding $\embedding$ has a
$\DiskMap$-framed straight-line drawing homeomorphic to $\embedding$. If the
requirement of the disk arrangement to satisfy condition (P2) is dropped, we
proved that it is $\cNP$-hard to decide whether $\Cluster$ has a
$\DiskMap$-framed straight-line drawing. We are not aware whether our problem is
known to be in $\cNP$. We ask whether techniques developed by Abrahamsen et
al.~\cite{Abrahamsen:2018:AGP:3188745.3188868} can be used to prove
$\exists\mathbb{R}$-hardness of our problem.

Angelini et al.~\cite{10.1007/978-3-662-45803-7_34} showed that if $\Cluster$ is
not embedded and all squares have the same size, it is $\cNP$-hard to decide
whether $\Cluster$ has a $\DiskMap$-framed drawing. They posed as an open
problem whether the same is true for embedded graphs. In our construction, the
squares have constant number of different side lengths and the side length of
the largest square is 32 time longer then the side length of the smallest rectangle.
We conjecture that our construction can be modified to show that it is indeed
$\cNP$-hard to decide whether a clustered graph $\Cluster$ with a non-planar
arrangement of squares (disk) of unit size and a $\DiskMap$-framed embedding
$\embedding$ has a $\DiskMap$-framed drawing that is homeomorphic to
$\embedding$. Further, we ask whether the obstacle number can be reduced to one.

\bibliography{strings_short,references_no_doi}

\newcommand{\bibdac}[2]{DAC'#2} \newcommand{\bibinvisau}[1]{Proc. Australian
  Symp. Inf. Vis. (invis.au #1)} \newcommand{\bibieeepdp}[2]{Proc. #1 IEEE
  Symp. Par. Distr. Process. #2} \newcommand{\bibieeecs}[1]{Proc. IEEE Int.
  Symp. Circ. Syst. #1} \newcommand{\bibcccg}[2]{CCCG'#2}
  \newcommand{\bibswat}[2]{SWAT'#2} \newcommand{\bibipco}[2]{IPCO'#2}
  \newcommand{\bibsofsem}[2]{SOFSEM'#2} \newcommand{\bibstoc}[2]{STOC'#2}
  \newcommand{\bibfocs}[2]{FOCS'#2} \newcommand{\bibsoda}[2]{SODA'#2}
  \newcommand{\bibgd}[2]{GD'#2} \newcommand{\bibinfovis}[1]{InfoVis'#1}
  \newcommand{\bibvis}[1]{Vis'#1} \newcommand{\bibpvis}[1]{PacificVis'#1}
  \newcommand{\bibsoftvis}[2]{SoftVis'#2} \newcommand{\bibeurocg}[2]{EuroCG'#2}
  \newcommand{\bibsocg}[2]{SoCG'#2} \newcommand{\bibwads}[2]{WADS'#2}
  \newcommand{\bibwg}[2]{WG'#2} \newcommand{\bibgta}{Proceedings of the
  Conference at Graph Theory and Applications}
  \newcommand{\bibisaac}[2]{ISAAC'#2} \newcommand{\bibcocoon}[2]{COCOON'#2}
  \newcommand{\bibtamc}[2]{TAMC'#2} \newcommand{\bibicalp}[2]{ICALP'#2}
  \newcommand{\biblatin}[2]{LATIN'#2} \newcommand{\bibesa}[2]{ESA'#2}
\begin{thebibliography}{10}
\providecommand{\url}[1]{\texttt{#1}}
\providecommand{\urlprefix}{URL }
\providecommand{\doi}[1]{https://doi.org/#1}

\bibitem{Abrahamsen:2018:AGP:3188745.3188868}
Abrahamsen, M., Adamaszek, A., Miltzow, T.: {The Art Gallery Problem is
  $\exists\mathbb{R}$-complete}. In: \bibstoc{50th}{18}. pp. 65--73. ACM (2018)

\bibitem{DBLP:journals/corr/abs-1709-09209}
Akitaya, H.A., Fulek, R., T{\'{o}}th, C.D.: {Recognizing Weak Embeddings of
  Graphs}. In: {Artur Czumaj} (ed.) \bibsoda{Twenty-Ninth}{18}. pp. 274--292.
  SIAM (2018)

\bibitem{JGAA-367}
{Alam}, M., {Kaufmann}, M., {Kobourov}, S.G., {Mchedlidze}, T.: {Fitting Planar
  Graphs on Planar Maps}. J. Graph Alg. Appl.  \textbf{19}(1),  413--440 (2015)

\bibitem{10.1007/978-3-662-45803-7_34}
Angelini, P., {Da Lozzo}, G., {Di Bartolomeo}, M., {Di Battista}, G., Hong,
  S.H., Patrignani, M., Roselli, V.: {Anchored Drawings of Planar Graphs}. In:
  Duncan, C., Symvonis, A. (eds.) \bibgd{22}{14}. pp. 404--415. Springer (2014)

\bibitem{DBLP:journals/dcg/AngeliniFK11}
Angelini, P., Frati, F., Kaufmann, M.: {Straight-Line Rectangular Drawings of
  Clustered Graphs}. Disc. \& Comput. Geom.  \textbf{45}(1),  88--140 (2011)

\bibitem{10.1007/978-3-319-62127-2_7}
Banyassady, B., Hoffmann, M., Klemz, B., L{\"o}ffler, M., Miltzow, T.:
  {Obedient Plane Drawings for Disk Intersection Graphs}. In: Ellen, F.,
  Kolokolova, A., Sack, J.R. (eds.) \bibwads{15th}{17}. pp. 73--84. Springer
  (2017)

\bibitem{DBLP:journals/ijcga/BergK12}
de~Berg, M., {Khosravi}, A.: {Optimal Binary Space Partitions for Segments in
  the Plane}. Int. J. Comput. Geom. \& Appl.  \textbf{22}(3),  187--206 (2012)

\bibitem{BLASIUS2016306}
Bläsius, T., Rutter, I.: {A New Perspective on Clustered Planarity as a
  Combinatorial Embedding Problem}. Theoretical Comput. Sci.  \textbf{609}(2),
  306 -- 315 (2016)

\bibitem{CORTESE20091856}
Cortese, P.F., {Di Battista}, G., Patrignani, M., Pizzonia, M.: {On Embedding a
  Cycle in a Plane Graph}. Disc. Math.  \textbf{309}(7),  1856 -- 1869 (2009)

\bibitem{Eades2006}
Eades, P., Feng, Q., Lin, X., Nagamochi, H.: {Straight-Line Drawing Algorithms
  for Hierarchical Graphs and Clustered Graphs}. Algorithmica  \textbf{44}(1),
  1--32 (2006)

\bibitem{DBLP:conf/esa/FengCE95}
Feng, Q., Cohen, R.F., Eades, P.: {Planarity for Clustered Graphs}. In:
  Spirakis, P. (ed.) \bibesa{Third}{95}. pp. 213--226. Springer (1995)

\bibitem{10.1007/3-540-58950-3_377}
Godau, M.: {On the Difficulty of Embedding Planar Graphs with Inaccuracies}.
  In: Tamassia, R., Tollis, I.G. (eds.) \bibgd{Second}{94}. pp. 254--261.
  Springer (1995)

\bibitem{ribo2006realization}
{Rib{\'o} Mor}, A.: {Realization and Counting Problems for Planar Structures}.
  Ph.D. thesis, FU Berlin (2006),
  \url{https://refubium.fu-berlin.de/handle/fub188/10243}

\end{thebibliography}

\newpage

\iftrue
\appendix

\section{Drawing on Non-Planar Disk Arrangements }
\label{apx:hardness}

We study the following problem referred to as \textsc{\clusterDrawing}. Given a
planar clustered graph $\Cluster$ with an embedding $\embedding$ and a
non-planar disk arrangement $\DiskMap$, is there are $\DiskMap$-framed drawing
$\Gamma$ that is homeomorphic to $\embedding$ and $\DiskMap$. This requires that
in the transition from $\embedding$ to $\Gamma$ at any point in time an edge
$uv$ does not intersect a geometric object other than its own clusters.  Note
that if the disks $\DiskMap$ are allowed to overlap and $\Cluster$ is the
intersection graph of $\DiskMap$, the problem is known to be
$\cNP$-hard~\cite{10.1007/978-3-319-62127-2_7}.  Thus, in the following we
require that the disk may not overlap, but there can be disk-pipe intersection.
By Alam at al.~\cite{JGAA-367} it follows that the problem restricted to thin
touching rectangles instead of disks is $\cNP$-hard.  We strengthen this result
and prove that in case that the rectangles are axis-aligned squares and not
allowed to touch the problem remains $\cNP$-hard. The squares in the proof can
be replaced in a straight-forward manner by disks.

To prove $\cNP$-hardness for \clusterDrawing problem we reduce from
\textsc{Planar Monotone 3-SAT}~\cite{DBLP:journals/ijcga/BergK12}. For each
literal and clause we construct a graph $\Cluster$ with a disk arrangement
$\DiskMap$ of $\Cluster$ such that each disk contains exactly one vertex.  We
refer to these instances as \emph{literal} and \emph{clause gadget}. In order to
transport information from the literals to the clauses, we construct a
\emph{copy} and \emph{inverter gadget}.  The design of the gadgets is inspired
by Alam et al.~\cite{JGAA-367}, but due to the restriction to squares rather
than rectangles, requires a more careful placement of the geometric objects.

\paragraph{Notation}
A line $l$ separates the euclidean plane in two \emph{half planes $h_1$ and
$h_2$} that are \emph{spanned by~$l$}. Way say that $l$ \emph{supports $h_1$
($h_2$)}.
 
\subsection{Regulator}

\begin{figure}[tb]
	\centering
	\includegraphics[page=2]{fig/regulator.pdf}
	\caption{Regulator}
	\label{fig:regulator}
\end{figure}

Let $B$ be  an axis-aligned square that contains a vertex $v$ in its interior
and let $h_1, h_2$ be two half planes such that the intersection $q$ of their
supporting lines $l_1, l_2$ lies in the interior of $B$. We say that $h_1$ and
$h_2$ are \emph{proper half planes of $B$}.  We describe the construction of a
gadget that restricts the feasible placements of $v$ in a $\DiskMap$-framed
drawing by a half plane $h$ that excludes a placement of $v$ in
$h_1 \cap h_2$ but allows for a placement in $h_1 \cap B$ or $h_2 \cap B$.
Since $q$ lies in the interior of $B$, there is a half plane $h$ that does not
contain $q$ and for each $i=1,2$, $h \cap h_i \cap B$ is non-empty.

We construct a \emph{regulator gadget of $v$ in $B$ with respect to $h_1$ and
$h_2$} as follows.  Let $l$ be the supporting line of $h$.  We create 
two axis-aligned squares $R$ and $O$ such that $R, O$ and $B$ intersect $l$ in
this order and $h$ neither intersects the interior of $R$ nor the interior of
$O$. Place a vertex $u$ in $R$ and route an edge $uv$ through $h \cup R \cup B$.

\begin{lemma} \label{lemma:regulator}	
	Let $W$ be a regulator gadget of $v$ in $B$ with respect to two proper half
	planes $h_1$ and $h_2$.  For every point $p_v \in h \cap B$ there is a
	$\DiskMap$-framed drawing $\Gamma$ such that $v$ lies on $p_v$. There is no
	$\DiskMap$-framed drawing of $W$ such that $v$ lies in~$\overline{h} \cap B$.
\end{lemma}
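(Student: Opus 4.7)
The plan is to prove both assertions by analyzing the straight-line segment representing the edge $uv$ relative to the obstacle disk $O$. By construction, $R$, $O$, $B$ intersect $l$ in this order with $R, O \subseteq \overline{h}$; hence the problem reduces to a visibility question. In any $\DiskMap$-framed straight-line drawing, the segment $uv$ is forbidden from entering the interior of $O$, so the lemma amounts to asking which placements of $v$ in $B$ admit such a visibility-preserving choice of $u \in R$.

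For the existence direction, I fix an arbitrary $p_v \in h \cap B$ and place $v$ at $p_v$. I then choose $u$ in the interior of $R$ at distance $\delta > 0$ from $l$, near the endpoint of $R \cap l$ closest to $B$. Because $p_v$ lies strictly on the opposite side of $l$ from $O$, taking $\delta$ small enough forces the segment $uv$ to cross $l$ at a point strictly on the $R$-side of $O \cap l$. Beyond this crossing the segment lies in the open half-plane $h$, which is disjoint from $O \subseteq \overline{h}$; the short initial portion in $\overline{h}$ has, along $l$, $x$-coordinate strictly on the $R$-side of $O$ and hence also misses $O$. The resulting straight-line embedding is therefore $\DiskMap$-framed with $v = p_v$.

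For the non-existence direction, assume $v \in \overline{h} \cap B$ and $u \in R$. Both endpoints lie in $\overline{h}$, so by convexity the segment $uv$ lies entirely in $\overline{h}$. The key geometric fact is that $O$ is constructed to be tall enough (in the direction perpendicular to $l$) to block every such segment: on the sub-interval of $l$ covered by $O$, the segment's distance from $l$ is a convex combination of the corresponding distances of $u$ and $v$, each bounded by the maximum extent of $R$ or $B \cap \overline{h}$ away from $l$, which by construction is dominated by the extent of $O$. Combined with the fact that $O$'s projection on $l$ lies strictly between those of $R$ and $B$, this forces the segment to cross the interior of $O$, contradicting the $\DiskMap$-framed property.

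The main obstacle I expect is making the ``$O$ is tall enough to block'' step fully precise when $l$ is not axis-aligned, since $R$, $O$, $B$ are axis-aligned squares whose orientation is independent of the chosen $h$. I would handle this by adopting coordinates with $l$ as one axis and re-expressing $R$, $O$, $B$ as (rotated) rectangles in these coordinates, then verifying the required height inequality directly from the construction; alternatively, one can invoke the freedom the construction leaves in the sizes and positions of $R$ and $O$ to guarantee the geometric domination. Once this blocking is established, the remaining steps (convexity of the segment and the planarity check) are routine.
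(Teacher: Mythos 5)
Your existence direction is fine and is essentially what the paper does (the paper merely asserts it ``by construction''; your explicit placement of $u$ at distance $\delta$ from $l$ near the end of $R\cap l$ facing $O$ makes it precise). The problem is the non-existence direction, where you have replaced the actual mechanism with one the construction does not support. You argue that the segment $uv$, lying entirely in $\overline{h}$, must pierce $O$ because $O$ is ``tall enough'' perpendicular to $l$ to dominate every convex combination of the heights of $u\in R$ and $v\in\overline{h}\cap B$. The construction guarantees no such thing: it only requires that $R$, $O$, $B$ meet $l$ in this order and that $h$ avoids the interiors of $R$ and $O$. The squares are of comparable size, $u$ may sit at the corner of $R$ farthest from $l$, and nothing pins down where along $l$ the square $O$ sits between $R$ and $B$; a straight segment can simply pass on the far side of $O$ from $l$. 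You flag this yourself as the main obstacle and propose to fix it by resizing $O$, but that is redesigning the gadget rather than proving the lemma as stated, and an enlarged $O$ would interfere with the careful packing of squares in the literal, copy and clause gadgets.

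The ingredient you are missing is the homeomorphism constraint, which you never use in this direction. The edge $uv$ is routed in the prescribed embedding through $h\cup R\cup B$, i.e.\ on the $h$-side of $O$; a curve from $u$ to $v$ passing on the $\overline{h}$-side of $O$ differs from it by a nontrivial loop in the complement of $O$, so no $\DiskMap$-framed drawing homeomorphic to the embedding can realize it. This is exactly the paper's argument: picking points $r\in l\cap R$ and $o\in l\cap O$, homeomorphism forces the segment $uv$ to cross $l$ on the ray from $o$ towards $r$; hence $u$ and $v$ lie on opposite sides of $l$, and since $u\in R\subseteq\overline{h}$ we get $v\in h$, independently of the size of $O$. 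Without invoking this topological constraint the lemma is simply false for a small obstacle $O$, so the gap in your argument is essential, not cosmetic.
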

	
\begin{proof} 
	By construction of $W$, there is for every point $p_v \in h \cap B$ a
	$\DiskMap$-framed drawing $\Gamma$ such that $v$ lies on $v$. 

	The supporting line $l$ of $h$ intersects the boundary of $R$ 
	and does not intersect the interior of $O$. Let $r$ and $o$ be points in the
	intersection of  $l$ with $R$ and $O$, respectively. Since $\Gamma$ is
	homeomorphic to $W$ the edge $uv$ intersects $l$ on the ray starting in $o$ in
	the direction towards $r$. Therefore, $u$ and $v$ lie on different sides of
	$l$. Since $u \in R$, it follows that $v \in \overline{h}$.
\end{proof}

We refer to the intersection $h \cap B$ as the \emph{regulated region of $v$  in
$B$}. Thus, by the construction of $W$, the regulated region $Q$ has a non-empty
intersection with $h_1 \cap B$ and $h_2 \cap B$. Thus, by the lemma there is for
each placement of $v$ in $Q \cap h_i \cap B, i=1,2,$ a $\DiskMap$-framed drawing. On the
other hand, since $h \cap h_1 \cap h_2 \cap B = \emptyset$, there is no
$\DiskMap$-framed drawing such that $v$ lies in $h_1 \cap h_2 \cap B$.

\subsection{Literal Gadget}
\label{sec:literal_gadget}

\begin{figure}[t]
	\centering
	\subfloat[]{
		\includegraphics[page=1, height=0.5\textwidth]{fig/variable.pdf}
	}
	\subfloat[\label{fig:hardness:literal:constr}]{
		\includegraphics[page=3, height=0.5\textwidth]{fig/variable.pdf}
	}
	\caption{Literal gadget}
	\label{fig:hardness:literal}
\end{figure}

In this section we construct a clustered graph $\Cluster$ with an arrangement
of squares $\DiskMap$ that models a literal $u$. The \emph{positive literal
gadget} is depicted in Fig.~\ref{fig:hardness:literal}. We obtain the
\emph{negative literal gadget} by mirroring vertically.

The \emph{center block} is a unit square $C$ with corners $\alpha_1, \alpha_2,
\alpha_3, \alpha_4$ in clockwise order.  For each corner $\alpha_i$ of $C$
consider a line $l_i$ that is tangent to $C$ in $\alpha_i$, i.e, $l_i \cap C =
\{\alpha_i\}$.  Let $p_i$ be the intersection of lines $l_{i-1}$ and $l_{i}$
where $l_0 = l_4$; refer to Fig.~\ref{fig:hardness:literal:constr}.  Let $R_1,
\dots, R_4$ be four pairwise non-intersecting squares that are disjoint from $C$
such that $R_i$ contains $p_i$ in its interior.  We add a cycle
$v_1v_2v_3v_4v_1$ such that $v_i \in R_i$. We refer to the vertices $v_i$ as
\emph{cycle vertices} of the \emph{cycle block} $R_i$. For each $i$, let $j_i$
be a half plane that contains $R_{i+1}$ but does not intersect $C$.  We place a
regulator $W_i$ of $v_i$  with respect to $h_{i-1}$ and $h_{i}$ and position it
such that it lies in $j_i$, where $h_i$ is the half plane spanned by $l_i$ with
$C \not\subseteq h_i$.  This finishes the construction.

We now show that there exist two combinatorially different realizations. 
%Let
%$h_i$ be the half plane spanned by $l_i$ that does contains $C$. 
Consider $R_1$ and
its two adjacent squares $R_4$ and $R_2$.  Let $Q_i$ be the regulated region of
$R_i$ with respect to $W_i$.  Then the intersection $I_1:= \overline{h_4} \cap
\overline{h_2} \cap Q_1\not=\emptyset$. We refer to $I_1$ as the \emph{infeasible region of $R_1$}.
The intersection $h_4 \cap Q_1$ is the \emph{positive region $P_1$ of
$R_1$}.  The region $h_2 \cap Q_1$ is the \emph{negative region $N_1$
of $R_1$}.  All these regions are by construction not empty. The
positive, negative and infeasible region of $R_i, i \not= 1$ are defined
analogously.

\rephrase{Property}{\ref{lemma:no_infeasible_drawing}}{
	\PropertyNoInfeasibleDrawing
}

\begin{proof}
	Consider a $\DiskMap$-framed
	drawing $\Gamma$ with an edge $v_iv_{i+1}$ such that $v_i$ lies in
	$\overline{P_i}$. We show that $v_{i+1}$ lies in $N_{i+1}$.  If $v_{i+1}$ lies
	in $\overline{h_{i}}$, then $v_i$ and $v_{i+1}$ lie on the same side of $l_{i}$,
	which is tangent to $\alpha_{i+1}$. Thus, $v_iv_{i+1}$ intersects $C$. If
	follows that $v_{i+1}$ lies in $h_{i}$ and therefore in the negative
	region $N_{i+1}$.
	
	Assume that $v_1$ lies in its infeasible region $I_1$, then $v_2$ lies in
	$N_2$ by the above observation. Likewise, $v_3, v_4, v_1$ lie in $N_3, N_4,
	N_1$, respectively. This contradicts $N_1 \cap I_1 = \emptyset$. This
	generalizes to all $v_i, i \not =1$. Thus, each $v_i$ either lies in $P_i$ or
	in $N_i$.  Moreover, if one $v_i$ lies in $N_i$ the above observation yields
	that all of them lie in their negative region.
\end{proof}

In the following, we fix the placement of the cycle blocks for literal gadgets
as depicted in Fig.~\ref{fig:hardness:literal}. This allows us to show that the
literal gadget has a $\DiskMap$-framed drawing where all cycle vertices lie in
their positive region and one where all cycle vertices lie in their negative
region. We refer to the former as \emph{positive} and the latter as
\emph{negative} drawing.  We construct two specific drawings.  Let $D_i$ be the circle
inscribed in the square $R_i$. Since $P_i$ and $N_i$ are obtained from the
intersection of two half planes with $R_i$, they are convex.  The intersection
$p^+_i$ of the boundary of $P_i$ with $D_i$ that does not lie on the boundary of
$Q_i$ is the \emph{positive placement of $v_i$}.  Analogously, we obtain the
\emph{negative placement $p^-_i$ of $v_i$}. The positive and negative placement
induce two straight-line drawings of the graph induced by cycle vertices. By
Lemma~\ref{lemma:regulator} we can extend these drawings to $\DiskMap$-framed
drawings of the whole literal gadget, including the regulators.

\rephrase{Property}{\ref{lemma:valid_literal_gadget}}{
	\PropertyValidLiteral
}

%\begin{property} \label{lemma:valid_literal_gadget}
%  %
%  The positive and negative placement induce a $\DiskMap$-framed drawing of the
%  literal gadget, respectively.
%  %
%\end{property}

\subsection{Copy and Inverter Gadget}

\begin{figure}[t]
	\centering
		\includegraphics[page=1, width=.99\textwidth]{fig/copy.pdf}
	\caption{Copy gadget. Green and red regions depict positive and negative
	regions, respectively.}
	\label{fig:hardness:copy}
\end{figure}

In this section, we describe the copy and inverter gadget; see
Fig.~\ref{fig:hardness:copy}.  The copy gadget connects two positive or
two negative literal gadgets $X$ and $Y$ such that a drawing of $X$ is positive
if and only if the drawing of $Y$ is positive. Correspondingly, the inverter
gadget connects a positive literal gadget $X$ to a negative literal gadget $Y$
such that the drawing of $X$ is positive if and only if the drawing of $Y$ is
negative.  The construction of the inverter and the copy gadget are symmetric.

A copy gadget of two negative literal gadgets is obtained by vertically
mirroring the copy gadget that connects two positive literals. Correspondingly,
we obtain an inverter gadget that connects a negative literal with a positive
literal by mirroring the inverter gadget that connects a positive literal with a
negative literal. Thus, in the following we describe only the construction of
the copy gadget with two positive literals. Whenever necessary we emphasize
differences for the inverter gadget.

Let $X$ and $Y$ be two positive literal gadgets whose center blocks are aligned
on the $x$-axis with a sufficiently large distance. We construct the copy gadget
that \emph{connects} $X$ and $Y$ as follows.  Let $R_X$ and $R_Y$ be the two
cycle blocks of the literal gadgets $X$ and $Y$, respectively, with minimal
distance on the $x$-axis. For $Z \in \{X, Y\}$, let $P_Z$ and $N_Z$ be the
positive and negative regions of $R_Z$.  Since $N_Z$ an $P_Z$ are convex and
their intersection is empty, there exist a half plane $h_Z$ that contains $N_Z$
but not $P_Z$, and vice versa.  We refer to $h_Z$ as \emph{positive half plane
of $Z$} if it contains the positive region $P_Z$, otherwise it is
\emph{negative}.

\begin{figure}[t]
	\centering
	\includegraphics[page=3, width=.99\textwidth]{fig/copy.pdf}
	\caption{Construction of the square $B$ of the copy gadget.}
	\label{fig:hardness:copy:construction}
\end{figure}

Consider a negative half plane $h_X$ of $X$ and a positive half plane $h_Y$ of $Y$;
refer to Fig.~\ref{fig:hardness:copy:construction}. We create two
non-intersecting squares $O_X$ and $O_Y$ that are contained in the intersection
of $\overline{h_X}$ and $\overline{h_Y}$ such that a corner of $O_Z$ lies on the
supporting line of $h_Z, Z \in \{X,Y\}$. Let $I$ be the intersection of the
supporting lines of $h_X$ and $h_Y$. We place a square $B$ with a vertex $b$ and
the intersection $I$ in its interior.  Additionally, we add a regulator of $b$
with respect to $h_X$ and $h_Y$ to exclude the intersection $h_X\cap h_Y$ as
feasible placement of $b$.  We route the edges $bv_X$ and $bv_Y$ through $R_X
\cup h_X \cup B$ and $R_Y \cup h_Y \cup B$ respectively.  This construction
ensures that in a $\DiskMap$-framed drawing a placement of the vertex $v_X$ in
the negative region $N_X$ excludes the possibility that the vertex $v_Y$ lies in
the positive region $P_Y$.  In order to ensure that $v_X$ cannot lie at the same
time in $P_X$ as $v_Y$ in $N_Y$, we construct a square $B'$ with respect to a
positive half plane of $X$ and a negative half plane of $Y$ analogously to $B$.
If the distance between $X$ and $Y$ is sufficiently large, we can ensure the
intersection of $B$ and $B'$ is empty.  In the construction of the inverter
gadget one has to consider two positive half planes and two negative half
planes, respectively.  We refer to the corresponding gadgets as \emph{copy} and
\emph{inverter gadget}. We say that the copy and inverter gadget \emph{connect}
two literals.

\rephrase{Property}{\ref{lemma:consistent_copy_gadget}}{
	\PropertyConsistentCopyGadget
}

\begin{proof}
	
	By Property~\ref{lemma:valid_literal_gadget} the vertices $v_X$ and $v_Y$ of
	$X$ and $Y$ cannot lie the infeasible regions of $X$ and $Y$, respectively.
	Thus, similar to the proof of Lemma~\ref{lemma:no_infeasible_drawing} we can
	assume for the sake of contradiction that the vertex $b$ of block $B$ lies in
	the intersection of $\overline{h_X}$ and $\overline{h_Y}$. Thus, vertex $v_X$
	lies in the negative region of $R_X$ and $v_Y$ in the positive region of
	$R_Y$. But then vertex $b'$ of the block $B'$ lies in $h'_X$ and $h'_Y$.  This
	contradicts that $Q' \cap h'_X \cap h'_Y \cap B'$ is empty, where $Q'$ is the
	regulated region of $B'$, and $h'_X$ and $h'_Y$ are the negative and positive
	half planes of $R_X$ and $R_Y$, respectively.	
\end{proof}

\begin{figure}[t]
	\centering
		\includegraphics[page=2, width=.99\textwidth]{fig/copy.pdf}
	\caption{Inverter gadget. Green and red regions depict positive and negative
	regions, respectively.}
	\label{fig:hardness:inverter}
\end{figure}

The same argumentation is applicable to the inverter gadget.

\begin{property}
	\label{lemma:consistent_inverter_gadget}
	%
	%A $\DiskMap$-framed drawing of the literal gadget $X$ in a $\DiskMap$-framed
	%drawing $\Gamma$ of the copy gadget is positive if and only if the
	%$\DiskMap$-framed drawing of the literal gadget $\negLit{Y}$ is negative.
	Let $\Gamma$ be a $\DiskMap$-framed drawing of a positive literal $X$ and a
	negative literal $Y$ connected by an inverter gadget. Then the $\DiskMap$-framed
	drawing of $X$ in $\Gamma$ is positive if and only if the $\DiskMap$-framed
	drawing of $Y$ is negative.
\end{property}

From now on we refer to the exact placement of the squares as depicted in
Fig.~\ref{fig:hardness:copy} and Fig.~\ref{fig:hardness:inverter} as the copy
and inverter gadget, respectively. The positive and negative placement of the
literal gadgets induce a $\DiskMap$-framed drawing of the copy gadgets as
indicated by the green and red straight-line segments, respectively. By
Lemma~\ref{lemma:regulator} we can extend these drawings to drawings of the
entire gadget.

\rephrase{Property}{\ref{lemma:valid_copy_inverter_gadget}}{
	\PropertyValidCopyInverter
}

\subsection{Clause Gadget}
\label{sec:clause}
\begin{figure}[tb]
	\centering
		\includegraphics[page=1, width=.99\textwidth]{fig/clause.pdf}
		\caption{Clause gadget.}
	\label{fig:hardness:clause}
\end{figure}

%We refer to  $P_u$ and $N_u$ as the
%\emph{positive}  and \emph{negative $N_u$ regions of $u$} and correspondingly to
%\emph{positive $p^+_u$} and \emph{negative $p^-_u$  placements of $u$}. 

We construct a \emph{clause gadget} with respect to three positive literal
gadgets $X, Y,Z$ arranged as depicted in Fig.~\ref{fig:hardness:clause}. The
negative clause gadget, i.e., a clause with three negative literal gadgets, is
obtained by mirroring vertically.

We construct the clause gadget in two steps. First, we place a \emph{transition
block} $T_A$ close to each literal gadget $A \in \{X, Y, Z\}$. In the second
step, we connect the transition block to a vertex $k$ in a \emph{clause block
$K$} such that for every placement of $k$ in $K$ at least one drawing of the
literal gadgets has to be positive.

\begin{figure}[tb]
	\centering
	\includegraphics[page=2]{fig/clause.pdf}	
	\caption{Construction of the transition block.}
	\label{fig:transition_block}
\end{figure}

Consider the literal gadget $X$ and let $R_X$ be the right-most cycle block of $X$.  Let
$h_X$ be a negative half plane of $R_X$, i.e., $h_X$ contains the positive
region $P_X$ but not the negative region $N_X$, refer to
Fig.~\ref{fig:transition_block}. We now place a transition block $T_X$ such that
the intersection $T_X \cap h_X$ has small area.  Further, let $p^+_X$ and
$p^-_X$ be the positive and negative placements of $X$, respectively. Let
$q^-_X$ be a point in $T_X \cap h_X$. Let $i$ be the intersection point of the
supporting line $l_X$ of $h_X$ and the line segment $p^-_Xq^-_X$. We place an
obstacle $O^1_X$ such that $l_X$ is tangent to $O^1_X$ in point~$i$.
Finally, we place a \emph{transition vertex} $t_X$ in the interior of $T_X$ and
route the edge $v_Xt_X$ through $h_X \cup T_X \cup R_X$, where
$v_X \in R_X$.

Consider a half plane $h'_X$  such that $O^1_X \not\subseteq h'_X$ and $N_X
\not\subseteq h'_X$ and such that the supporting line $l'_X$ of $h'_X$ contains
$p^+_X$ and is tangent to $O^1_X$. Let $q^+_X$ be a point $h'_X \cap R_X$.
Observe that $q^+_X$ and $q^-_X$ induce a positive and negative drawing of $X$,
respectively.  Further, if $X$ has a negative drawing, then $t_X$ lies in the
region $h_X \cap T_X$. In the following, we refer to $h_X \cap T_X$ as the
\emph{negative region} of $T_X$. The transition blocks of $Y$ and $Z$ are
constructed analogously with only minor changes. The transition block $T_Z$ of
$Z$ is constructed with respect to the top-most cycle block. Note that we can
choose the points $q^+_A, A \in \{X, Y, Z\}$ independent from each other as long
as each of them induces a positive drawing of the literal gadget $A$.

The blue dotted circle in Fig.~\ref{fig:transition_block} indicates how to
replace the square by a disk $D$ that contains $q^-_X$ and $q^+_X$.  Denote by
$\xmax S$ the maximum $x$-coordinate of a point in a bounded set $S \subset
\R^2$.  Note that $\xmax D\cap h_X > \xmax T_X \cap h_X$. To ensure that our
construction remains correct for disks we add a regulator $R$ with a respect a
half plane $g$ such that $\xmax D\cap h_x \cap g = \xmax T_X \cap h_x \cap g$
and $g$ contains $q^+_X, q^-_x$.

\begin{figure}[tb]
	\centering
	\includegraphics[page=6]{fig/clause.pdf}	
	\caption{Construction of the clause block.}
	\label{fig:clause_block}
\end{figure}

Given the placement of the transition block $T_X, T_Y$ and $T_Z$ as depicted in
Fig.~\ref{fig:clause_block}, we construct the \emph{clause block $K$} as
follows. We choose a point $q_{X,Y}$. Let $l^-_X$ and $l^-_Y$ be the lines
through the points $q^-_X, q_{X, Y}$, and $q^-_Y, q_{X, Y}$, respectively.
Further, consider a line $l^-_Z$ with $q^-_Z  \in l^-_Z$ such that the
intersection point $q_{A, Z} := l^-_Z \cap l^-_A, A \in \{X, Y\}$ lies in
between $q^-_A$ and $q_{X, Y}$. Further, let $l^+_X$ be the line through $q^+_X,
q_{Y, Z}$, $l^+_Y$ the line through $q^+_Y, q_{X, Z}$, and let $l^+_Z$ be the line through
$q^+_Z$ and $q_{X, Y}$. Let $h_A$ be a half plane that does not contain the
negative region $N_A$ and whose supporting line contains the intersection $i_A$
of $l^-_A$ and $l^+_A$. We place obstacles $O^2_A$ such that $O^2_A
\not\subseteq h_A$ and the supporting line of $h_A$ is tangent to $O^2_A$ in
point $i_A$. We place the clause box $K$ such that it contains $q_{X, Y},
q_{Y,Z}$, $q_{X,Z}$ and a new vertex $k$ in its interior. We finish the
construction by routing the edges $kt_A$ through $K \cup h_A \cup T_A, A \in
\{X, Y, Z\}$, where $t_A \in T_A$.

By construction we have that for each $y \in \{q^-_Y, q^+_Y\}$ and $z \in
\{q^-_Z, q^+_Z\}$ the points $y, z$ and $q_{Y, Z}$ induce a $\DiskMap$-framed
drawing. The analog statement for the points $q_{X, Z}$ and $q_{X, Y}$ is also
true.  Further, if $h_X \cap h_Y \cap  h_Z = \emptyset$, then there is no
$\DiskMap$-framed drawing such that each vertex $t_A$ lies on $q^-_A$.
Fig.~\ref{fig:clause_block} shows that there is an arrangement of the clause
block and the obstacles such that $h_X \cap h_Y \cap  h_Z$ indeed is empty. From
now on we refer the arrangement in Fig.~\ref{fig:hardness:clause} as the clause
gadget.

\rephrase{Property}{\ref{lemma::valid_clause_gadget}}{
	\PropertyValidClause
}

%\begin{inparaenum}[(P1)]
%%
%  \item \label{prop:clause:not_all_negative}
%    %
%    for every three points $p_X, p_Y, p_Z$ with $p_A \in N_A, A = X,Y,Z$, there is
%    no point $p$ in $K$ such that each point $p_A$ is visible from $p$, and
%  \item \label{prop:clause:satisfiable}
%  %
%    for every pair $A,B$ in $\{X, Y, Z\}$ there is a point $p_{\{A,B\}} \in K$ that is
%  visible from $p^+_A, p^-_A, p^+_B$ and $p^-_B$.
%  %
%\end{inparaenum}

%If property P\ref{prop:clause:not_all_negative} is satisfied not all literal
%gadgets can have a negative drawing at the same time. Property
%P\ref{prop:clause:satisfiable} implies all remaining combinations of positive
%and negative drawings of the literal gadget exist. 

%\begin{lemma}
%  \label{lemma::valid_clause_gadget}
%  The clause gadget has properties P1 and P2.
%\end{lemma}

%\begin{proof}	
%  %
%  Each red shaded region in $K$ correspond the set of points that are visible
%  from a point in the negative region $N_A$ of literal gadget $A \in \{X, Y,
%  Z\}$.Since the intersection of the three regions is empty, the clause gadget
%  has property P1.

%  The blue squares in Fig.~\ref{fig:hardness:clause} correspond to the point
%  $p_{\{X,Y\}}, p_{\{Y,Z\}}$ and $p_{\{X,Z\}}$ of property P2. As indicated by
%  the red and green line segments, the positive and negative placement of each
%  literal $A$ is visible from a point $p_{\{A, B\}}, B \in \{X, Y, Z\} \setminus
%  \{A\}$, respectively. Therefore, the clause gadget indeed has property~P2.
%  %
%\end{proof}

\subsection{Reduction}

\begin{figure}[tb]
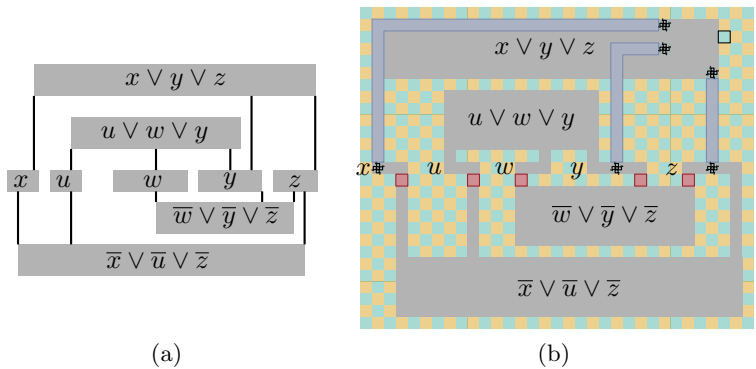

	\centering
	 \subfloat[]{
		\includegraphics[page=1]{fig/rectilinear_representation.pdf}
	}
	\subfloat[\label{fig:reduction:mod}]{
		\includegraphics[page=4]{fig/rectilinear_representation.pdf}
	}
	\caption{Example of planar monotone $3$-SAT instance with a corresponding rectilinear
	representation.}
	\label{fig:3SAT}
\end{figure}

A 3-SAT instance $(U, C)$ on a set $U$ of $n$ boolean variables and $m$ clauses
$C$ is \emph{monotone} if each clause either contains only positive or only
negative literals. It is \emph{planar} if the bipartite graph $G_{U, C} = (U
\cup C, \{ uc \mid u  \in c \text{ or } \overline{u} \in c \text{ with } u \in U
\text{ and } c \in C\})$ is planar. A \emph{rectilinear representation} of a
monotone planar 3-SAT instance is a drawing of $G_{U, C}$ where each vertex is
represented as an axis-aligned rectangle and the edges are vertical line
segments touching their endpoints; see Fig.~\ref{fig:3SAT}. Further, all vertices corresponding to a
variable lie on common line $l$, the positive and negative clauses are separated
by $l$. The problem \textsc{Monotone Planar 3-SAT} asks whether a monotone
planar 3-SAT instance with a given rectilinear representation is satisfiable. De
Berg and Khosravi~\cite{DBLP:journals/ijcga/BergK12} proved that
\textsc{Monotone Planar 3-SAT} is \cNP-complete.  We use this problem to show
that the \textsc{\clusterDrawing} problem is $\cNP$-hard.

\begin{figure}[tb]
	\centering
	\includegraphics[page=2,width=0.99\textwidth]{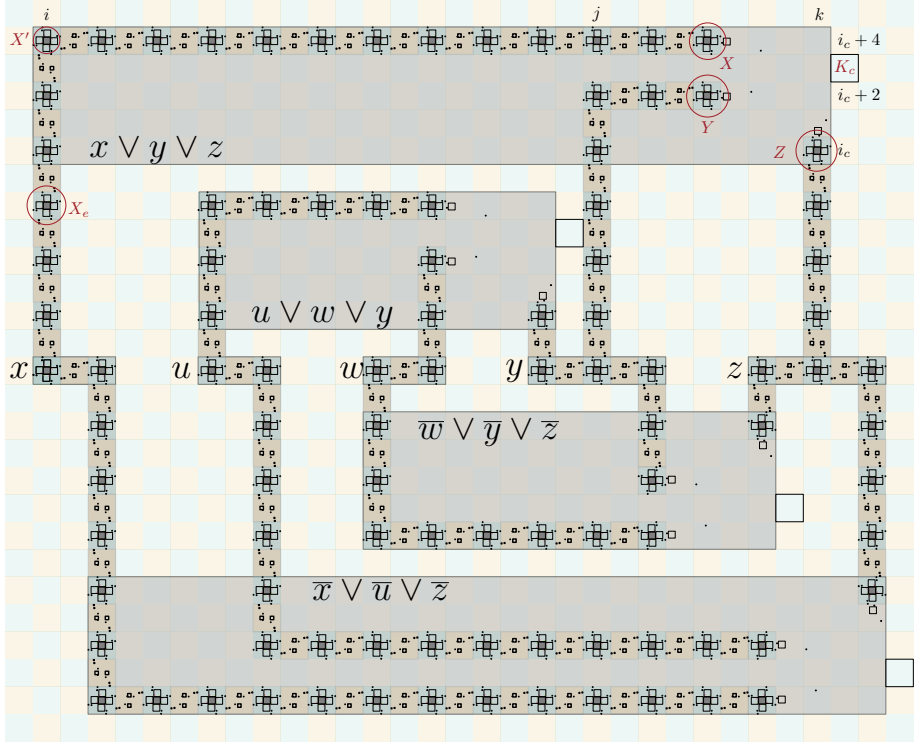}
	\caption{Sketch of the final arrangement of squares. Edges are omitted from the
	drawing.}
	\label{fig:rectilinear_rep}
\end{figure}

\begin{theorem}
	The problem \textsc{\clusterDrawing} with axis-aligned squares is $\cNP$-hard
	even when the clustered graph $\Cluster$ is restricted to vertex degree 5 and
	the obstacle number is $2$.
\end{theorem}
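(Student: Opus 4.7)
The plan is to complete the reduction from \textsc{Planar Monotone 3-SAT} to \textsc{\clusterDrawing} whose framework is already laid out in the preceding subsections. Given an instance $(U,C)$ together with its rectilinear representation, I would first modify that representation as in Fig.~\ref{fig:sec:reduction:mod}: scale it so that every variable rectangle and every clause rectangle covers sufficiently many cells of a checkerboard, and every connecting edge occupies an entire column of cells between its endpoints. This guarantees enough room to place one gadget per cell at a uniform scale while respecting the combinatorial structure of the 3-SAT instance.

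Next I would populate the checkerboard with the gadgets built in the previous subsections. In every blue cell inside a variable rectangle I place a positive literal gadget, joining horizontally adjacent ones along a variable by copy gadgets in the orange cells, and placing an inverter gadget (in the red cells) immediately below the variable to pass its negation downward. The $\Gamma$-shaped route from a variable to each of its occurrences in a clause is filled in by a chain of literal gadgets connected by copy gadgets, and the occurrence at the clause is wired into a clause gadget placed at the right edge of the positive clause rectangle (negative clause rectangles use the vertically mirrored clause gadget). Because each gadget occupies a constant number of cells and uses a constant number of different side lengths, the whole clustered graph $\Cluster$, its square arrangement $\DiskMap$ and the $\DiskMap$-framed embedding $\embedding$ are produced in polynomial time.

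For correctness I argue in both directions. If $(U,C)$ is satisfiable, I fix a satisfying assignment and give every positive literal gadget of a true variable (resp.\ negative literal gadget of a false variable) its positive placement, and all remaining literal gadgets their negative placement. Property~\ref{lemma:valid_literal_gadget} realises each literal gadget as a $\DiskMap$-framed drawing, Property~\ref{lemma:valid_copy_inverter_gadget} extends this to all copy and inverter gadgets, and since every clause contains a true literal Property~\ref{lemma::valid_clause_gadget} yields a $\DiskMap$-framed drawing of each clause gadget; these local drawings glue along shared blocks into a global $\DiskMap$-framed straight-line drawing of $\Cluster$. Conversely, given any $\DiskMap$-framed drawing, Property~\ref{lemma:no_infeasible_drawing} forces every literal gadget to be globally positive or globally negative, and Properties~\ref{lemma:consistent_copy_gadget} and~\ref{lemma:consistent_inverter_gadget} propagate these states consistently along copy and inverter chains, giving each variable a well-defined truth value. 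Property~\ref{lemma::valid_clause_gadget} forbids a clause gadget all of whose literals are negative, so the induced assignment satisfies $C$.

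It remains to check the quantitative claims. For the degree bound, the vertex of highest degree is a cycle vertex $v_i$ of a literal gadget: it is incident to two cycle edges, one edge to its regulator, and at most two further edges coming from adjacent copy, inverter, transition or clause constructions that attach at that corner of the literal gadget, giving degree at most $5$. For the obstacle number, each pipe constructed in the gadgets traverses at most the two auxiliary obstacle disks explicitly placed along it (for example the blocks $O^1_A$ and $O^2_A$ in the transition/clause construction, or the paired obstacles in the copy and inverter gadgets), so the obstacle number of $\DiskMap$ is at most $2$. The main obstacle in the proof is exactly this global geometric consistency: making sure that when gadgets of different internal scales are tiled across the checkerboard the pipes of one gadget never accidentally cross disks of another, and that the side lengths chosen locally can be realised simultaneously within the grid; this is handled by choosing a sufficiently fine checkerboard and by using the regulators (as already done for the disk variant of the transition block) to confine each critical placement to a region whose extent does not depend on whether squares or inscribed disks are used.
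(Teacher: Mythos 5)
Your proposal follows essentially the same route as the paper's own proof: the same checkerboard-based modification of the rectilinear representation, the same tiling by literal, copy, inverter and clause gadgets, the same two-directional correctness argument via Properties~\ref{lemma:valid_literal_gadget}--\ref{lemma::valid_clause_gadget}, and the same accounting for the degree-5 bound (three edges within a literal gadget plus at most two from attached copy/inverter gadgets) and the obstacle number of two. The only differences are presentational — the paper pins down the odd/even column bookkeeping and explicit gadget sizes where you appeal to a ``sufficiently fine checkerboard'' — but the argument is the same.
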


\begin{proof}
 Let $(U, C)$ be a planar monotone 3-SAT instance with a rectilinear
 representation $\Pi$.  Let $l$ be a horizontal or vertical line that intersects
 $\Pi$. Further, let $l_\epsilon$ be a tunnel of width $\epsilon$ around $l$. We
 obtain from $\Pi$ a new rectilinear representation by increasing the width of
 $l_\epsilon$ by an arbitrary positive factor $x$. This operations allows us to
 do the necessary modifications.
 
 In the following we modify $\Pi$ to fit on a checkerboard of $O(|C|)$ rows and
 columns where each column has width $d$ and every row has height $d$. A row or
 column is \emph{odd} if its index is an odd number, otherwise it is
 \emph{even}. The pair $(i, j)$ refers to the cell in column $i$ and row $j$.
 We align all vertices corresponding to variables in the rectilinear
 representation in a common row and such that the left-most variable vertex is
 in column $1$; refer to Fig.~\ref{fig:rectilinear_rep}.  The width of each
 rectangle $r_u$ of variable $u$ is increased to cover $2\cdot n_u - 1$ columns where
 $n_u$ is the number of occurrences of $u$ and $\bar{u}$ in $C$.  To ensure that each $r_u$
 starts in an odd column, we increase the distance between two consecutive
 variables such that the number of columns between the variables is odd and is
 at least three. Since we are able to add an arbitrary number of columns between
 two consecutive variables, we can assume without loss of generality that no two
 edges of the rectilinear representation share a column and that their columns
 are odd. We adapt the rectangle of a clause such that it covers five rows and
 at least six columns, and such that its left and right sides are aligned with
 the left-most and right-most incoming edges, respectively. Note that the
 positive clauses lie in rows with a positive index and the negative clauses in
 rows with a negative index.  Each operation adds add most a constant number of
 columns and rows per vertex and per edge to the layout. Thus, the width and
 height of the final layout is in $O(|C|)$.  Further, it can be computed in time
 polynomial in $|C|$.

 In the following we construct a planar embedded graph $\Cluster$ and an
 arrangement of squares $\DiskMap$ of $\Cluster$.  We use the modified rectilinear layout
 to locally replace the variable by a sequence of positive and negative literals
 connected by either a copy or an inverter gadget. Clauses are replaced with the
 clause gadget and then connected with a sequence of literals and copy gadget to
 the respective literal in the variable.
 
 Observe that the literal gadget is constructed such that all its squares fit in
 a larger square $S$. The copy and inverter gadget together with two literals is
 constructed such that they fit in rectangle three times the size of $S$. The
 clause gadget fits in a rectangle of width six times the size of the square $S$
 and its height is five times the height of $S$.

 Thus, we assume that the size of the square $S$ and the size of the squares of
 the checkerboard coincide. Let $r$ be the row that contains the variable
 vertices. Every column contains at most one edge of the rectilinear
 representation. Thus, we place a positive literal gadget in cell $(i, r)$ if
 the edge in column $i$ connects a variable $u$ to a positive clause. Otherwise,
 if the edge connects $u$ to a negative clause, we place a negative literal
 gadget in cell $(i, r)$. Since every edge of the rectilinear representation
 lies in an odd column, we can connect two literals of the same variable by
 either a copy or inverter gadget depending on whether both literals are
 positive or negative, or one is positive and the other negative.

 We substitute an edge $e$ of the rectilinear representation that connects a
 variable to a positive clause as follows. Let $i$ be the column of $e$. For
 every odd row $r$ that is covered by $e$ we place a positive literal gadget in
 cell $(i, r)$.  The copy gadget can be rotated in order to connect a literal
 gadget in cell $(i, r)$ to a literal gadget in a cell $(i, r+2)$.
 
 Let $R_c$ be the clause rectangle in the modified rectilinear representation of
 a positive clause $c$.  Let $i < j < k$ be the columns of the edges in the
 rectilinear representation that connect $R_c$ to the variables $x, y, z$.  We
 place the clause gadget in the clause rectangle such that literal gadget $Z$,
 refer to Fig.~\ref{fig:rectilinear_rep}, lies in column~$k$. Let $i_c$ denote
 the lowest row of $c$. Then the literal gadget of $X$ and $Y$ lie in column
 $k-4$ with $X$ in row $i_c + 4$ and $Y$ in column $i_c +2$. The clause block
 $K_c$ lies in cell $(i_c + 3, k+1)$.  We place a positive literal gadget $X'$
 in cell $(i, i_c + 4)$ and connect it with an alternating sequence of literal
 and copy gadgets to $X$.  Let $X_e$ be the literal gadget in cell $(i, i_c-2)$.
 Since between two variables there are at least three empty columns we can
 connect $X'$ to $X_e$ with an alternating sequence of literal and copy gadgets.
 Analogously, we connect $Y$ to the edge in column $j$ by placing a positive
 literal gadget in cell $(j, i_c + 2)$.  A negative clause is obtained by
 vertically mirroring the construction of a positive clause.

 We now argue that the embedding of the graph $\Cluster$ is planar and that the
 pairwise intersections of squares in the arrangement $\DiskMap$ are empty.
 Observe that, with the exception of the clause blocks $K_c$ every gadget is
 entirely embedded in the modified rectilinear representation. The column of
 $K_c$ is even, and therefore it cannot intersect with an edge of the
 rectilinear representation.  Recall that the rectilinear representation is
 planar and all gadget are placed in disjoint cells. Therefore, the pairwise
 intersection of squares in $\DiskMap$ is empty. Moreover, each literal gadget
 is planar embedded in a single cell, each clause is embedded in a rectangle that
 covers five rows and six columns, and finally each copy and inverter gadget
 together with its two literal gadget is embedded in either a single row and 3
 columns or in 3 rows and a single column. Thus, since the modified rectilinear
 representation is planar and the pairwise intersections of squares in $\DiskMap$
 are empty, the graph $\Cluster$ has a planar embedding.  Finally, the maximal
 vertex degree of the literal gadget is three, the maximal degree a clause
 gadget is four.  Connecting two literal gadgets by copy or inverter gadget
 increases the maximum vertex degree of $\Cluster$ to five. Further, the
 obstacle number of the literal gadget and clause gadget is one and the
 obstacle number of the copy and inverter gadget is two.
 
 It is left to show that the layout can be computed in polynomial time.  As
 already argued the modified rectilinear representation $\Pi$ of the monotone planar
 $3$-SAT instance can be computed polynomial time.  Moreover, the height and
 width of $\Pi$ is linear in $|C|$.  Thus, we inserted a number of gadgets
 linear in $|C|$. Further, the coordinates of each gadget are independent of the
 instance $(U, C)$, thus overall the representation of the final 
 arrangement $\DiskMap$ is polynomial in $|U|$ and $|C|$. Placing a single
 gadget requires polynomial time, thus overall the clustered graph $\Cluster$ and
 the arrangement $\DiskMap$ of squares is can be computed in polynomial time.
 
 \paragraph{Correctness}
 Assume that $(U, C)$ is satisfiable. Depending on whether a variable $u$ is
 true or false, we place all cycle vertices on a positive placement of a
 positive literal gadget and on the negative placement of negative literal
 gadget of the variable. Correspondingly, if $u$ is false, we place the vertices
 on the negative and positive placements, respectively.  By
 Property~\ref{lemma:valid_literal_gadget}, the placement induces a
 $\DiskMap$-framed drawing of all literal gadgets and
 Property~\ref{lemma:valid_copy_inverter_gadget} ensures the copy and inverter
 gadgets have a $\DiskMap$-framed drawing.  Since at least one variable of each
 clause is true, there is a $\DiskMap$-framed drawing of each clause gadget by
 Lemma~\ref{lemma::valid_clause_gadget}. 
 
 Now consider the clustered graph
 $\Cluster$ has a $\DiskMap$-framed drawing. Let $X$ and $Y$ be
 two positive literal gadgets or two negative literal gadgets connected with a
 copy gadget. By Lemma~\ref{lemma:consistent_copy_gadget}, a drawing of $X$ is
 positive if and only if the drawing of $Y$ is positive.
 Property~\ref{lemma:consistent_inverter_gadget} ensures that the drawing of a
 positive literal gadget $X$ is positive if and only if the drawing of the
 negative literal gadget $Y$ is negative, in case that both are joined
 with an inverter gadget.  Further, Lemma~\ref{lemma:no_infeasible_drawing}
 states that each cycle vertex lies either in a positive or negative region.
 Thus, the truth value of a variable $u$ can be consistently determined by any
 drawing of a positive or negative literal gadget of $u$.  By
 Lemma~\ref{lemma::valid_clause_gadget}, the clause gadget has no
 $\DiskMap$-framed drawing of the clause gadget such that all literal gadgets
 have a negative drawing. Thus, the truth assignment indeed satisfies $C$.
\end{proof}

Observe that we placed blue dotted circles in the figures of the gadgets. The
squares can be replaced by these circles without changing the essential
properties of the gadgets. More precisely, the disks can be replaced such that
they contain the positive and negative placements of the corresponding squares.
In case of obstacles, $O$ is tangent to some line $l$ in a point $p$, thus, $O$
is replaced by a disk that is tangent to $l$ in $p$. In the clause gadget we
placed the obstacles $O^2_A$ with respect to two intersecting lines $l_1, l_2$.
In this case we place a disk with small radius such that it is tangent to $l_1$
and $l_2$.  Therefore, the proof for squares can be adapted to proof that the
\textsc{\clusterDrawing} problem with disks is $\cNP$-hard.

\begin{corollary}
	The problem \textsc{\clusterDrawing} with disks is $\cNP$-hard, even when the
	clustered graph is restricted to vertex degree 5.
\end{corollary}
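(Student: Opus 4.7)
The plan is to lift the square-based reduction from the previous theorem to disks by a local, gadget-by-gadget replacement that preserves the combinatorial behavior of every gadget. For every \emph{block} square $B$ that contains a vertex $v$, I would choose a disk $D_B$ inscribed in $B$ (or a disk of comparable size) whose interior still contains all placements of $v$ used in the square proof, in particular the positive and negative placements $p_i^+, p_i^-$ on the cycle blocks. For every \emph{obstacle} square $O$ that is tangent to a supporting line $l$ at a critical point $p$, I would replace $O$ by a small disk tangent to $l$ at $p$ on the same side. Since every half plane $h$ governing a regulator, a transition, or a clause restriction is defined by such a tangent line, the geometric condition forcing an edge to cross $l$ is unchanged.

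With these substitutions fixed, I would re-verify each gadget in the same order as in the square proof. The regulator lemma depends only on the fact that the obstacle meets the line $l$ on one side while the block meets it on the other; this is preserved, so the regulated region $h \cap B$ is unchanged. For the literal gadget, the positive, negative, and infeasible regions depend only on the half planes $h_i$ spanned by tangents to the center disk, so Property~\ref{lemma:no_infeasible_drawing} and Property~\ref{lemma:valid_literal_gadget} carry over verbatim. The same is true of Property~\ref{lemma:consistent_copy_gadget} and Property~\ref{lemma:valid_copy_inverter_gadget}, because the copy and inverter arguments rely only on the tangency of $O_X$, $O_Y$ with the supporting lines of $h_X$, $h_Y$, which disks of the same tangency provide.

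The main technical obstacle, already anticipated by the author in the construction of the transition block, is that a disk $D$ replacing a larger square can extend further into a half plane $h_X$ than the square did, so that $\xmax (D \cap h_X) > \xmax (T_X \cap h_X)$; this would enlarge the admissible placements of $t_X$ and could invalidate the clause argument. I would neutralize this exactly as sketched, by installing an auxiliary regulator with respect to a half plane $g$ containing both $q_X^+$ and $q_X^-$ chosen so that $\xmax (D \cap h_X \cap g) = \xmax (T_X \cap h_X \cap g)$. For the clause obstacles $O_A^2$, which in the square proof sit tangent to the line through the intersection point $i_A$ of $l_A^-$ and $l_A^+$, I would instead place a sufficiently small disk that is tangent to both $l_A^-$ and $l_A^+$, which keeps both of the underlying crossing arguments intact so that Property~\ref{lemma::valid_clause_gadget} still holds. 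Once every gadget has been checked, the assembly of the reduction from \textsc{Planar Monotone 3-SAT} is identical to the square case, the underlying clustered graph $\Cluster$ is unchanged, its maximum degree remains five, and the correctness argument transports word-for-word, yielding $\cNP$-hardness for disks.
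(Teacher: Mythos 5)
Your proposal is correct and follows essentially the same route as the paper: replace each block square by a disk containing the relevant (positive/negative) placements, replace each obstacle by a disk with the same tangency, fix the transition block with the auxiliary regulator for the half plane $g$, and place the clause obstacles $O^2_A$ as small disks tangent to both $l^-_A$ and $l^+_A$. The gadget-by-gadget re-verification and the unchanged assembly and degree bound match the paper's argument.
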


\fi
\end{document}